\numberwithin{equation}{section}
\providecommand{\U}[1]{\protect\rule{.1in}{.1in}}
\providecommand{\U}[1]{\protect \rule{.1in}{.1in}}
\newtheorem{theorem}{Theorem}[section]
\newtheorem{definition}[theorem]{Definition}
\newtheorem{lemma}[theorem]{Lemma}
\newtheorem{proposition}[theorem]{Proposition}
\newenvironment{proof}[1][Proof]{\noindent \textbf{#1.} }{\  \rule{0.5em}{0.5em}}
\newcommand{\RR}{\mathbb{R}}
\def\cP{{\mathcal P}}
\newcommand{\dd}{{\rm d}}
\def\conv{{\operatorname{conv}}}
\newcommand{\cD}{{\mathcal{D}}}
\newcommand{\partialc}{\partial^c} 
\newcommand{\NN}{\mathbb{N}}
\newcommand{\cM}{\mathcal{M}}
\def\cH{{\mathcal H}}
\newcommand{\scalarp}[2]{\langle #1, #2 \rangle}
\def\cB{{\mathcal B}}
\DeclareMathOperator*{\prox}{prox}
\newcommand{\bbN}{\mathbb{N}}
\newcommand{\dif}{\mathop{}\!\mathrm{d}}
\DeclareMathOperator*{\TV}{TV}
\def\1{{\mathbbm 1}}
\begin{document}
	\title{Metropolis-adjusted Subdifferential Langevin Algorithm}
		\author{Ning Ning\thanks{Department of Statistics,
			Texas A\&M University, College Station, Texas, USA. \\ \indent\hspace{0.19cm} Email: patning@tamu.edu}}
	\date{}
	\maketitle

\begin{abstract}
	The Metropolis-Adjusted Langevin Algorithm (MALA) is a widely used Markov Chain Monte Carlo (MCMC) method for sampling from high-dimensional distributions. However, MALA relies on differentiability assumptions that restrict its applicability. In this paper, we introduce the Metropolis-Adjusted Subdifferential Langevin Algorithm (MASLA), a generalization of MALA that extends its applicability to distributions whose log-densities are locally Lipschitz, generally non-differentiable, and non-convex. 
	We evaluate the performance of MASLA by comparing it with other sampling algorithms in settings where they are applicable. Our results demonstrate the effectiveness of MASLA in handling a broader class of distributions while maintaining computational efficiency.
\end{abstract}

\textbf{Key words}: Markov chain Monte Carlo; Metropolis-adjusted Langevin algorithm; Generalized subdifferential; Non-convex and non-smooth optimization; Exponential convergence

	\section{Introduction}
	\label{sec:Introduction}

	\subsection{MALA}
	\label{sec:MALA}
	
	The Metropolis-Hastings (MH) algorithm is a fundamental method within the class of Markov Chain Monte Carlo (MCMC) techniques, widely employed for sampling from probability distributions, particularly when the target distribution is known only up to a normalizing constant. Given a target distribution $\pi$ on $\mathbb{R}^d$ with density with respect to (w.r.t.) the Lebesgue measure, which is also denoted as $\pi$ for simplicity,
	\begin{equation}
		\pi(x) = \frac{e^{-U(x)}}{\int e^{-U(y)}\;\mathrm{d} y},
		\label{eq:intro:distribution}
	\end{equation}
	where $U: \mathbb{R}^d \to [0,\infty)$ is the potential function, the MH algorithm enables sampling from $\pi(x)$ even when only the relative density values, i.e., the ratio $\pi(x) / \pi(y)$, are available. Specifically, starting from an initial random variable $X_0$, the MH algorithm constructs a Markov chain $(X_n)_{n \geq 0}$ iteratively through the following steps. At iteration $n$, given the current state $X_n$, a candidate sample $Y_{n+1}$ is generated from a proposal distribution with density $q(X_n, y)$ w.r.t. the Lebesgue measure. This proposed sample is then accepted with probability
	\begin{equation}
		\label{eqn:acceptance_prob}
		\alpha(x,y) = 1 \wedge \frac{\pi(y) q(y,x)}{\pi(x) q(x,y)},
	\end{equation}
	where we use the notation $a \wedge b = \min(a,b)$. If the proposal is accepted, the next state is updated as $X_{n+1} = Y_{n+1}$; otherwise, the chain remains at the current state, i.e., $X_{n+1} = X_n$.
	
	By construction, the Markov chain $(X_n)_{n \geq 0}$ is reversible w.r.t. the target density $\pi$, ensuring that $\pi$ is an invariant distribution. The efficiency of the MH algorithm heavily depends on the choice of the proposal distribution $q$. A common choice is the Gaussian proposal centered at $x \in \mathbb{R}^d$ as
	\begin{equation*}
		q(x,y) = \frac{1}{(2 \sigma^2)^{n/2}} \exp \biggl(-\frac{|x-y|^2}{2 \sigma^2} \biggr).
	\end{equation*}
	For symmetric proposals satisfying $q(x,y) = q(y,x)$, the acceptance probability simplifies to
	\begin{equation*}
		\alpha(x,y) = 1 \wedge \frac{\pi(y)}{\pi(x)}.
		\label{def:acceptanceRWM}
	\end{equation*}
	The Metropolis-Hastings (MH) algorithms utilizing symmetric proposal distributions are specifically referred to as random walk Metropolis (RWM) algorithms. To optimize performance, the asymptotic acceptance probability should be tuned to approximately 0.234, as established in \cite{gelman1997weak} and supported by recent studies, such as \cite{ning2025convergence}.

	Intuitively, if the proposal density exploits the structure of the target density, a higher acceptance probability is possible. This leads to the Metropolis-adjusted Langevin algorithm (MALA), a notable class of algorithms derived from discrete approximations to Langevin diffusions. A Langevin diffusion for 
	$\pi(x)$ is a natural, non-explosive diffusion process that is reversible w.r.t $\pi(x)$. By leveraging the gradient of 
	$\pi(x)$, it tends to move more frequently in directions where 
	$\pi(x)$ increases. As a result, a discrete approximation to a Langevin diffusion achieves an optimal acceptance probability of approximately 0.574  \citep{roberts1998optimal}, which is significantly higher than that of the RWM algorithm. 	Specifically, assuming $U$ as continuously differentiable, the MALA is based on the following Langevin diffusion:
	\begin{equation}
		\label{eqn:LD}
		d X_t = -\nabla U(X_t)dt + \sqrt{2}d B_t,
	\end{equation}	
	where $B_t$ is a $d$-dimensional Brownian motion. Under additional conditions, this diffusion has a strong solution for any initial point and admits $\pi$ as its unique stationary measure \citep{Roberts1996Exponential}. 
	
	Sampling a path solution of \eqref{eqn:LD} is challenging, so discretizations are often employed to construct a Markov chain with similar long-term behavior. The Euler-Maruyama discretization of \eqref{eqn:LD} is given by	
	\begin{equation}
		\label{eqn:ELD}
		Y_{n+1} = Y_n - \gamma \nabla U(Y_n) + (2 \gamma)^{1/2} Z_{n+1},
	\end{equation}
	for all $n \geqslant 0$, 
	where $\gamma$ is the step size of the discretization, and $(Z_n)_{n \geq 1}$ is an independent and identically distributed (i.i.d.) sequence of $d$-dimensional standard Gaussian random variables.
	The iteration in \eqref{eqn:ELD} is known as the Unadjusted Langevin Algorithm (ULA). According to the classical results on page 342 of \cite{Roberts1996Exponential}: 
	\textit{``whereas the genuinely continuous-time processes often perform well on large classes of target densities, the situation is much more delicate for the approximations which would be used in practice. In particular, naive discretizations of the continuous-time models may lose not only the geometric rates of convergence but also all convergence properties, even for quite standard density $\pi$"}. 
	To address this issue, the MALA incorporates the Markov kernel associated with the Euler-Maruyama discretization \eqref{eqn:ELD} as a proposal kernel within a MH accept-reject scheme. This adjustment ensures that the resulting Markov chain better approximates $\pi$. Specifically, the MALA generates a new Markov chain ${X_n}$ according to 
	\begin{equation} \label{eq:def_MALA} X_{n+1} = X_n +(\breve{Y}_{n+1} - X_n) \mathbbm{1}_{\mathbb{R}_-}(U_{n+1} - \alpha_{\gamma}(X_n, \breve{Y}_{n+1})), 
	\end{equation}
	where $\{U_n\}$ is a sequence of i.i.d. uniform random variables on $[0,1]$, 
	\begin{equation}
		\label{eqn:proposal_MALA}
		\breve{Y}_{n+1} = X_n - \gamma \nabla U(X_n)+ (2 \gamma)^{1/2} Z_{n+1}, 
	\end{equation}	
	and $\alpha_{\gamma} : \mathbb{R}^{2d} \to [0,1]$ represents the acceptance probability function defined in \eqref{eqn:acceptance_prob} while explicitly revealing the dependence on the step size $\gamma$.

	\begin{figure*}[t!]
		\centering
		\includegraphics[width = 4.5in]{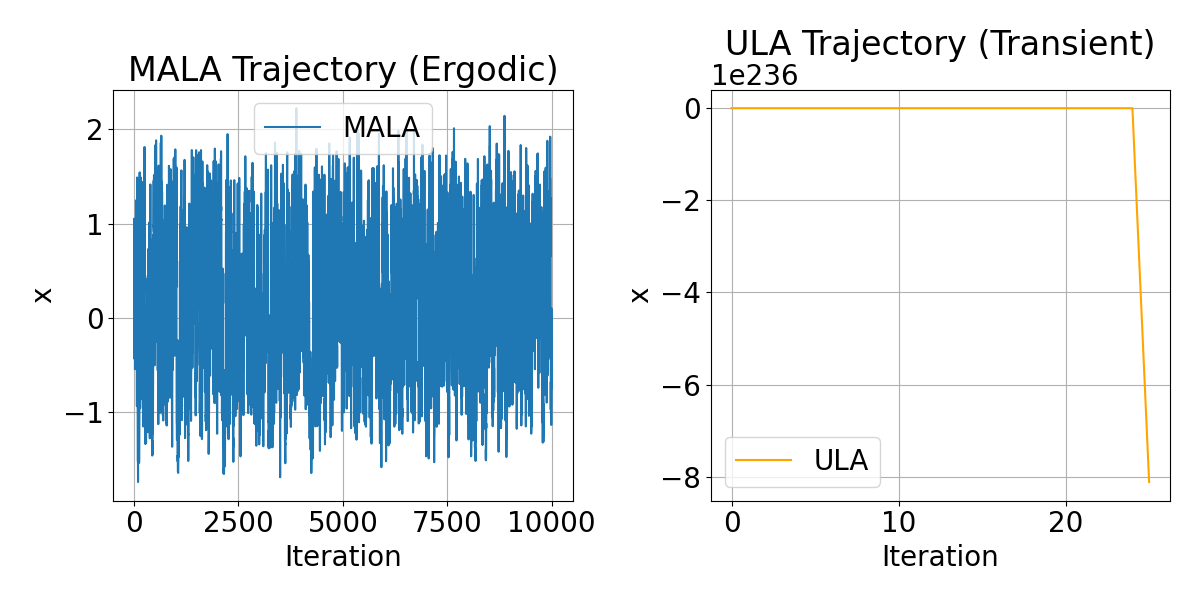}
		\caption{Trajectories of MALA and ULA for sampling from $\pi(x) \propto \exp(-|x|^4/4)$ with step size $\gamma= 0.1$ over 10,000 iterations. The left plot shows the MALA trajectory, which remains bounded and explores the support of $\pi(x)$, indicating ergodicity due to the MH correction. The right plot shows the ULA trajectory, which diverges to large  negative values, demonstrating transience.}
		\label{fig:mala_vs_ula_trajectory}
	\end{figure*}

	We now provide closer examination of the performance of MALA compared to ULA, which serves as a primary motivation for our algorithmic design. As demonstrated in \cite{Roberts1996Exponential}, even though the Markov chain $Y_n$ produced by ULA may admit a unique stationary distribution $\pi_{\gamma}$ and exhibit ergodic behavior, this stationary distribution often deviates from the true target distribution $\pi$. In Section 1.4.1 of that work, the authors give an illustrative example highlighting this issue, and further observe that \textit{“it may converge but not geometrically quickly even when the original diffusion is exponentially ergodic, or quite startlingly it may actually be a transient chain even though the original diffusion has a very well-behaved invariant distribution”}. Figure~\ref{fig:mala_vs_ula_trajectory} illustrates the contrasting behaviors of MALA and ULA when sampling from the target distribution $\pi(x) \propto \exp(-|x|^4/4)$, using a step size $\gamma = 0.1$. This distribution belongs to the family of generalized normal distributions, which have attracted significant attention in the engineering literature due to their flexible parametric forms for modeling diverse physical phenomena \citep{dytso2018analytical}.
	The left plot displays a trajectory of the MALA chain, which remains within the high-probability region of $\pi(x)$, reflecting its ergodicity ensured by the MH correction that mitigates discretization bias. In contrast, the right plot shows a ULA trajectory that diverges rapidly to large negative values, illustrating transience. This instability arises because the step size magnifies the effect of the cubic gradient $\nabla U(x) = x^3$ of the potential $U(x) = |x|^4/4$, causing the chain to escape to infinity. Importantly, on page 354 of \cite{Roberts1996Exponential}, the authors emphasize that \textit{“the ULA model is not guaranteed to behave well, more or less independently in most cases of the choice of $\gamma$”}. Figure \ref{fig:mala_vs_ula_histogram} compares the sampling performance of MALA and ULA via histograms obtained from 100,000 iterations with a small step size $\delta = 0.001$. The left panel shows that the MALA samples match the normalized target density (red curve), illustrating accurate sampling due to the MH correction. In contrast, the right panel shows that even with a small step size, ULA produces transient samples.

	\begin{figure*}[t!]
		\centering
		\includegraphics[width = 4.5in]{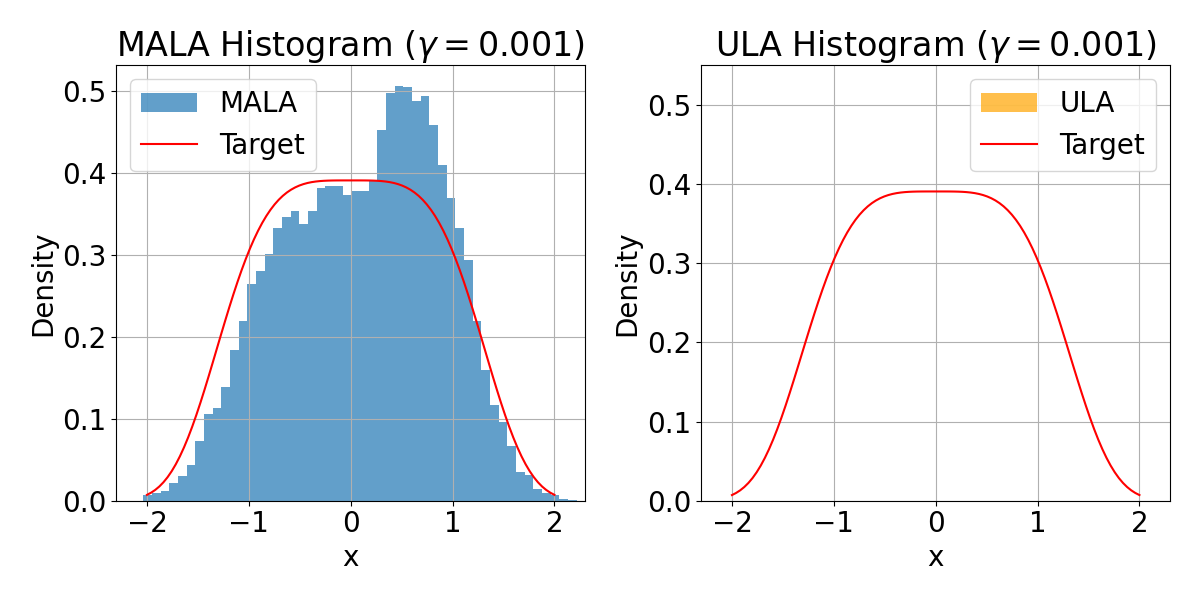}
		\caption{Histograms of MALA and ULA samples for $\pi(x) \propto \exp(-|x|^4/4)$ with step size $\gamma = 0.001$ over 100,000 iterations. The left plot shows the MALA histogram, basically matching the target density (red curve), indicating ergodicity due to the Metropolis-Hastings correction. The right plot shows the ULA histogram, whose samples are transient under small step size.}
		\label{fig:mala_vs_ula_histogram}
	\end{figure*}
	
	\subsection{MASLA}
	\label{sec:MASLA}
	
	In this paper, we introduce the Metropolis-adjusted subdifferential Langevin algorithm (MASLA), a generalization of MALA that incorporates nonsmooth, nonconvex potential functions. Specifically, in the realm of non-convex and non-smooth optimization, we consider generalized derivatives in the name of conservative field, which was introduced in \cite{bolte2021conservative} in recent years. Conservative fields coincide with classical gradients almost everywhere, having the Clarke subgradient \citep{clarke1990optimization} as the minimal convex conservative field. If $f$ is differentiable, $\nabla f$ is of course a conservative field. Consider the example from \cite{difonzo2022stochastic} with $f: \mathbb{R} \to \mathbb{R}$ and its conservative field as follows:
	\[
	f(y) = \begin{cases} 
		-x - 1, & x < -1 \\ 
		x + 1, & -1 \leq x < 0 \\ 
		1 - x, & 0 \leq x < 1 \\ 
		x - 1, & x \geq 1 
	\end{cases}, \quad
	D_f(y) = \begin{cases} 
		-1, & x < -1 \\ 
		[-1, 1], & x = -1 \\ 
		1, & -1 < x < 0 \\ 
		[-1, 1], & x = 0 \\ 
		-1, & 0 < x < 1 \\ 
		[-1, 1], & x = 1 \\ 
		1, & x > 1 
	\end{cases}.
	\]
	A detailed explanation of a simple locally Lipschitz function, lacking a well-defined proximal operator at all points but possessing an explicit Clarke subgradient, is provided in Section \ref{sec:nonapplicable}.
	
	Optimization plays a fundamental role in both statistics and machine learning, as many methods in these fields are formulated as optimization problems. In statistics, optimization is essential for parameter estimation, model fitting, and inference, while in machine learning, it underpins the training of models by minimizing loss functions and improving predictive performance.  To better capture learning and prediction tasks, structural constraints like sparsity or low rank are often imposed, or the objective is designed to be non-convex. This is particularly relevant for non-linear models such as tensor models and deep networks. While non-convex optimization offers immense modeling flexibility, it is often NP-hard. This challenge is common in fields like genomics and signal processing, where data starvation occurs. Consider a set of n covariate/response pairs $(x_1,y_1),\ldots,(x_n,y_n)$, where $x_i\in \mathbb{R}^p$ and $y \in \mathbb{R}$, data starvation means that the number of data points $n$ is much smaller than the number of model parameters $p$, i.e., $n \ll p$. Standard statistical methods, which require $n \geq p$, struggle in such scenarios. \textit{Sparse recovery} addresses this issue by fitting a sparse model vector to the data:
	\begin{align*}
		\hat{w}_\text{sp} = \underset{w \in \mathbb{R}^d}{\arg\min} 
		\sum_{i=1}^n(y_i - x_i^Tw)^2, \quad
		\text{s.t.}\; w \in \mathcal{B}_0(s).
	\end{align*}
	Here, $\mathcal{B}_0(s):=\{x\in \mathbb{R}^d: |\operatorname{supp}(x)|\leq s\}$ is the set of s-sparse vectors, with $\operatorname{supp}(x):=\{i:x_i\neq 0\}$. 
	Although the objective function is convex, the constraint is non-convex. For detailed illustrations and examples, see \cite{jain2017non}. In the MCMC regime, stochastic gradient Langevin dynamics have been developed for non-convex learning; see, e.g., \cite{deng2020non}.
	
	Modern problems in AI and numerical analysis often require nonsmooth formulations, for example, deep learning models employing ReLU activation functions and square loss. In statistical $M$-estimation, nonsmooth regularization functions, particularly the $\ell_1$-norm and its variants, have long served as foundational tools, demonstrating strong empirical and theoretical performance over decades. In the Bayesian framework, nonsmooth priors also play a central role in high-dimensional models~\citep{seeger2008bayesian,ohara2009review}. The Bayesian analogue of the $\ell_1$-penalty is the Laplace prior, which has been widely studied~\citep{park2008bayesian,carvalho2008high,dalalyan2018exponentially}.
	To address the challenges of nonsmooth potentials in Bayesian computation, \cite{pereyra2016proximal} proposed applying both MALA and ULA to the Moreau–Yosida envelope of a nonsmooth potential $g$. \cite{bernton2018langevin} analyzed the performance of the unadjusted approach in a specific setting. \cite{durmus2018efficient} extended these methods to composite potentials of the form $f + g$, where $g$ is nonsmooth, by replacing $g$ with its smooth Moreau–Yosida approximation. This smoothing approach has also been applied to Hamiltonian Monte Carlo~\citep{chaari2016hamiltonian}.
	Later \citet{durmus2019analysis} established a mixing rate of order $O(d / \varepsilon^2)$ for nonsmooth composite objectives using Wasserstein gradient flow. \citet{habring2024subgradient} further investigated the setting where the target potential takes the form $U(x) = F(x) + G(Kx)$, with $F: \mathbb{R}^d \to [0,\infty)$ and $G: \mathbb{R}^{d'} \to [0,\infty)$ both proper, convex, and lower semi-continuous (lsc), and $K: \mathbb{R}^d \to \mathbb{R}^{d'}$ a linear operator. 

	Several major problems arising in machine learning, numerical analysis, or non-regular dynamical systems are not adequately addressed by classical smooth or convex regularity frameworks, due to limitations in calculus tools and the need for algorithmic decomposition; see, e.g.,~\cite{bottou2018optimization} and references therein. To overcome these challenges, \cite{bolte2021conservative} introduced a notion of generalized derivatives known as conservative fields, which come equipped with a well-defined calculus and variational representation formulas. A function that admits a conservative field is called path differentiable, a class that includes all convex, concave, Clarke regular, and semialgebraic Lipschitz continuous functions. Building on this theory, MASLA is formulated based on a generalized, set-valued version of the Langevin diffusion:
	\begin{equation}
		\label{eqn:SLD}
		d X_t \in -D_U(X_t)dt + \sqrt{2}\, d B_t,
	\end{equation}
	where the conservative field \( D_U(x):\mathbb{R}^d \rightrightarrows \mathbb{R}^d \) is a set-valued mapping from \( \mathbb{R}^d \) to subsets of \( \mathbb{R}^d \). We say that $D_U$ is single-valued if $D_U(a)$ is a singleton for every $a \in \mathbb{R}^d$ (in which case we handle $D_U(a)$ simply as a function $D_U(a): \mathbb{R}^d \rightarrow \mathbb{R}^d$). Unlike the classical MALA algorithm, where proposals are generated using a single gradient as in \eqref{eqn:proposal_MALA}, MASLA draws a subgradient from the set-valued field \( D_U(x) \) at each iteration:
	\begin{equation}
		\label{eqn:SELD}
		\breve{Y}_{n+1} = X_n - \gamma \beta(X_n) + \sqrt{2 \gamma}\, Z_{n+1}, \qquad \beta(X_n) \in D_U(X_n),
	\end{equation}
	which is then followed by the standard MH accept-reject step as in~\eqref{eq:def_MALA}. 
	
	Notably, the chain rule for conservative fields is established in Lemma 2 of~\cite{bolte2021conservative}, and restated in Lemma~\ref{lem:chainRuleAC} of this paper. This result enables the application of MASLA to potential functions involving neural networks, where derivatives can be computed using automatic differentiation tools provided in popular APIs such as TensorFlow or PyTorch. Specifically, the term $\beta(X_n)$ in~\eqref{eqn:SELD} represents the output of automatic differentiation applied to a function $f_n(\cdot)$, which typically takes the form of a composition of matrix multiplications and nonlinear activation functions:
	\begin{equation}
		\label{cnn}
		f_n(x) =
		\ell\big(\sigma_L(W_L \sigma_{L-1}(W_{L-1} \cdots \sigma_1(W_1 \overline{X}_n))),\, \overline{Y}_n\big)\,,
	\end{equation}
	where $x = (W_1, \ldots, W_L)$ denotes the weights of the neural network represented by a sequence of $L$ matrices, $\sigma_1, \ldots, \sigma_L$ are vector-valued activation functions, $\overline{X}_n$ is a feature vector, $\overline{Y}_n$ is a label, and $\ell(\cdot,\cdot)$ is a loss function. In this context, automatic differentiation proceeds via the chain rule, implemented efficiently using the celebrated backpropagation algorithm. When the mappings $\sigma_1, \ldots, \sigma_L$ and $\ell(\cdot,\overline{Y}_n)$ are differentiable, the output of automatic differentiation coincides with of classical function differentiation. However, we note that in general, $\beta(X_n)$ may not even be an element of the Clarke subgradient.
	
	\subsection{Organization of the Paper}
	The rest of the paper proceeds as follows. In Section \ref{sec:Main_Results}, we give our theoretical results. 
	In Section \ref{sec:Numerical}, we compare the MASLA algorithm with leading methods in applicable scenarios (Section \ref{sec:Comparative}) and evaluate its strengths where those methods are inapplicable (Section \ref{sec:nonapplicable}).
	In this paper, we refer to $ U: \mathbb{R}^d \to [0, \infty) $ as the potential function of the target distribution defined in \eqref{eq:intro:distribution}. We also use the term ``potential function" for its conservative fields as defined in Definition \ref{def:conservativeMapForF}. Since these two notions of potential functions correspond to the same function $ U $ in this paper and are standard terminology in the Bayesian and optimization communities, we do not distinguish between them. Readers can discern the intended mathematical meaning from the context. 
	Throughout this paper, we denote by $ D $ a conservative field associated with the potential function $ U $, using the notation $ D_U $ when it is necessary to explicitly indicate this association.

	\section{Theoretical Results}
	\label{sec:Main_Results}
	
	\subsection{Preliminaries}
	\label{sec:Preliminaries}
	
	In this paper, we consider locally Lipschitz continuous functions in the Euclidean space  $\mathbb{R}^d$, which is equipped with the canonical Euclidean inner product $\left\langle \cdot,\cdot\right\rangle$ and its induced norm $\|\cdot\|$. Denote by $\mathcal{B}(\RR^d)$ the Borel $\sigma$-field on $\RR^d$, by $\cP(\RR^d)$ the set of probability
	measures on $\mathcal{B}(\RR^d)$, and by 
	$$\cP_1(\RR^d):=\left\{\nu\in
	\cP(\RR^d):\int\|x\|\nu(dx)<\infty\right\}.$$
	For every $x\in \RR^d$, $r>0$, $B(x,r)$ is the open Euclidean ball with center $x$ and radius $r$. The notation $A^c$ represents the complementary set of a set $A$ and $\operatorname{cl}(A)$ its closure. We denote $\lambda$ as the Lebesgue measure on $\RR$ and $\RR^d$, regardless of the dimension.
	
	\begin{definition}[Closed graph]
		The graph of a set-valued map $D\colon \RR^d \rightrightarrows \RR^q$ is given by
		\begin{align*}
			\operatorname{Graph}(D) = \Big\{ (x,z):\, x \in \RR^d,\, z \in D(x)\Big\}.
		\end{align*}
		$D$ is said to have closed graph or to be graph closed if $\operatorname{Graph}(D)$ is closed as a subset of $\RR^{d + q}$. 
	\end{definition}
	
	\begin{definition}[Absolutely continuous curve]
		An absolutely continuous curve is a continuous function $x \colon \RR \to \RR^d$ which admits a derivative $\dot{x}$ for Lebesgue almost all $t \in \RR$ in which case $\dot{x}$ is Lebesgue measurable, and $x(t) - x(0)$ is the Lebesgue integral of $\dot{x}$ between $0$ and $t$ for all $t \in \RR$.  
	\end{definition}
	
	
	\begin{definition}[Conservative set-valued fields]
		\label{def:conservative}
		Let $D \colon \RR^d \rightrightarrows \RR^d$ be a set-valued map. $D$ is a {\em conservative (set-valued)  field} whenever it has closed graph, nonempty compact values and for any absolutely continuous loop $\gamma \colon [0,1] \to \RR^d$, that is   $\gamma(0) = \gamma(1)$, we have
		\begin{align*}
			\int_0^1 \max_{v \in D(\gamma(t))} \left\langle \dot{\gamma}(t), v \right\rangle \dd t = 0,
		\end{align*}
		which is equivalent to 
		\begin{align*}
			\int_0^1 \min_{v \in D(\gamma(t))} \left\langle \dot{\gamma}(t), v \right\rangle \dd t = 0,
		\end{align*}
		where the integrals are understood in the Lebesgue sense. 
		Another equivalent characterization in term of the Aumann integral is given by
		\begin{equation}\label{circul0}
			\int_0^1 \left\langle  D(\gamma(t)), \dot{\gamma}(t)\right\rangle \dd t = \{0\}.
		\end{equation}
	\end{definition}

	\begin{definition}[Potential functions of conservative fields]
		\label{def:conservativeMapForF}
		Let $D \colon \RR^d \rightrightarrows \RR^d$ be a conservative field. A function $f$ defined through any of the  equivalent forms \begin{eqnarray*}
			f(x)& = & f(0)+\int_0^1 \max_{v \in D(\gamma(t))} \left\langle \dot{\gamma}(t), v \right\rangle \dd t \label{pot1}\\
			&= & f(0)+\int_0^1 \min_{v \in D(\gamma(t))} \left\langle \dot{\gamma}(t), v \right\rangle \dd t \label{pot2}\\
			& = & f(0)+\int_0^1 \left \langle \dot{\gamma}(t), D(\gamma(t)) \right\rangle \dd t
			\label{pot3}\end{eqnarray*}
		for any $\gamma$ absolutely continuous with $\gamma(0) = 0$ and $\gamma(1)=x$. $f$ is well and uniquely defined up to a constant. It is called a potential function for $D$. We shall also say that $D$ admits $f$ as a potential,
		or that $D$ is a conservative field for $f$.
	\end{definition}
	
	Recall that the convex hull of a set $ S $ in a real vector space (e.g., $\mathbb{R}^d$) is the smallest convex set that contains $ S $, denoted as $\conv(S)$.
	\begin{proposition}[\cite{bolte2021conservative}, Remark 3]
		The following properties hold:
		\begin{itemize}
			\item The potential function $f$ defined in Definition \ref{def:conservativeMapForF} is  locally Lipschitz continuous.
			\item If $D_1,D_2$ are two graph closed set-valued mappings with compact nonempty values, then $D_1\subset D_2$ and $D_2$ conservative implies that $D_1$ is conservative as well.
			\item If $D$ is conservative, $\conv(D)$ is conservative as well.
		\end{itemize}
	\end{proposition}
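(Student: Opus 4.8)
The plan is to treat the three bullet points as three essentially independent claims, each of which reduces to elementary variational calculus (support functions, Carath\'eodory's theorem, and measurable selections) once the circulation characterization of Definition~\ref{def:conservative} is in hand.

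For the first claim I would proceed in two steps. First I would confirm that $f$ in Definition~\ref{def:conservativeMapForF} is well defined, i.e. independent of the absolutely continuous curve $\gamma$ joining $0$ to $x$: given two such curves $\gamma_1,\gamma_2$, concatenating $\gamma_1$ with the time-reversal of $\gamma_2$ yields an absolutely continuous loop, and since reversing orientation sends $\dot\gamma$ to $-\dot\gamma$ (so that $\max_{v\in D(\gamma(t))}\langle\dot\gamma(t),v\rangle$ turns into $-\min_{v\in D(\gamma(t))}\langle\dot\gamma(t),v\rangle$), the equivalence of the max- and min-circulation conditions forces the two path integrals to agree. Second, to get local Lipschitz continuity near a point $x_0$, I would take the straight segment $\gamma(t)=x+t(y-x)$ for $x,y\in B(x_0,r)$ and use additivity of the potential along concatenated paths to write $f(y)-f(x)=\int_0^1 \max_{v\in D(x+t(y-x))}\langle y-x,v\rangle\,\dd t$; Cauchy--Schwarz then bounds $|f(y)-f(x)|$ by $\|y-x\|\cdot\sup\{\|v\|: v\in D(z),\ z\in\operatorname{cl}(B(x_0,r))\}$. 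The content of this step, and the part I expect to be the main obstacle, is showing this supremum is finite, i.e. that a conservative field is locally bounded: closed graph together with compact values alone does not guarantee this, and one must invoke conservativity (finiteness and integrability of the circulation integrand) to rule out blow-up. Measurability of $t\mapsto\max_{v\in D(\gamma(t))}\langle\dot\gamma(t),v\rangle$ must also be verified, via a measurable-selection argument for the closed-valued multifunction $t\mapsto D(\gamma(t))$.

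The second claim I would prove by a sandwich argument exploiting both the max- and min-formulations simultaneously. Fix any absolutely continuous loop $\gamma$. Since $D_1\subset D_2$, pointwise in $t$ we have $\max_{v\in D_1(\gamma(t))}\langle\dot\gamma(t),v\rangle \le \max_{v\in D_2(\gamma(t))}\langle\dot\gamma(t),v\rangle$ and $\min_{v\in D_1(\gamma(t))}\langle\dot\gamma(t),v\rangle \ge \min_{v\in D_2(\gamma(t))}\langle\dot\gamma(t),v\rangle$; integrating and using that $D_2$ is conservative gives $\int_0^1 \max_{v\in D_1(\gamma(t))}\langle\dot\gamma(t),v\rangle\,\dd t\le 0$ together with $\int_0^1 \min_{v\in D_1(\gamma(t))}\langle\dot\gamma(t),v\rangle\,\dd t\ge 0$. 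Combining these with the trivial pointwise inequality $\min\le\max$ yields the chain $0\le\int_0^1\min_{v\in D_1(\gamma(t))}\langle\dot\gamma(t),v\rangle\,\dd t\le\int_0^1\max_{v\in D_1(\gamma(t))}\langle\dot\gamma(t),v\rangle\,\dd t\le 0$, so both circulation integrals vanish and $D_1$ is conservative; measurability and finiteness are inherited from the bounds against $D_2$.

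For the third claim the key observation is that the support function is invariant under taking convex hulls: for every $w\in\RR^d$ one has $\max_{v\in\conv(D(x))}\langle w,v\rangle=\max_{v\in D(x)}\langle w,v\rangle$, since a linear functional attains its maximum over a compact set at an extreme point. Hence the circulation integrands of $\conv(D)$ and of $D$ coincide pointwise, and the loop condition for $\conv(D)$ follows immediately from that for $D$. It then remains to check the structural requirements of Definition~\ref{def:conservative}: $\conv(D)$ has nonempty convex compact values by Carath\'eodory's theorem (the convex hull of a compact subset of $\RR^d$ is compact), and it has closed graph by writing any $z_n\in\conv(D(x_n))$ as a convex combination of at most $d+1$ points of $D(x_n)$, then passing to convergent subsequences of the coefficients and of the (locally bounded) selected points and using the closed graph of $D$. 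As in the first claim, this last step leans on local boundedness of conservative fields, which I regard as the single recurring technical ingredient of the whole proposition; the circulation identities themselves are elementary once the support-function calculus is set up.
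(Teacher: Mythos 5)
First, note that the paper does not actually prove this proposition: it is imported verbatim, with attribution, from Remark 3 of \cite{bolte2021conservative}, so there is no internal proof to compare against. Judged on its own terms, your treatment of the second bullet is correct and complete (the sandwich $0\le\int_0^1\min_{v\in D_1(\gamma(t))}\langle\dot\gamma(t),v\rangle\,\dd t\le\int_0^1\max_{v\in D_1(\gamma(t))}\langle\dot\gamma(t),v\rangle\,\dd t\le 0$ is exactly the standard argument), and the support-function identity $\max_{v\in\conv(D(x))}\langle w,v\rangle=\max_{v\in D(x)}\langle w,v\rangle$ correctly disposes of the circulation condition in the third bullet. The well-definedness argument for $f$ via concatenation with a time-reversed curve, using the equivalence of the max- and min-formulations, is also right.

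The genuine gap is the one you yourself flag and then do not close: local boundedness of a conservative field. You are right that closed graph plus compact values does not imply it (e.g.\ $D(x)=\{1/x\}$ for $x\neq 0$, $D(0)=\{0\}$ has closed graph and singleton values but blows up at the origin), and your proof uses it in two load-bearing places: the Cauchy--Schwarz bound $|f(y)-f(x)|\le\|y-x\|\sup\{\|v\|:v\in D(z),\,z\in\operatorname{cl}(B(x_0,r))\}$ in the first bullet, and the extraction of convergent subsequences of the Carath\'eodory representatives $v_i^n\in D(x_n)$ when proving that $\conv(D)$ has closed graph in the third bullet (without boundedness the $v_i^n$ can diverge while their convex combination converges, so closedness of $\operatorname{Graph}(D)$ alone does not suffice). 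Saying that ``one must invoke conservativity to rule out blow-up'' is a statement of intent, not a proof: conservativity only tells you that $t\mapsto\max_{v\in D(\gamma(t))}\langle\dot\gamma(t),v\rangle$ is Lebesgue integrable with zero circulation along each loop, and converting that curve-by-curve integrability into a \emph{uniform} bound $\sup_{z\in\operatorname{cl}(B(x_0,r))}\max_{v\in D(z)}\|v\|<\infty$ requires an actual argument that is missing. Note that the present paper is itself careful on exactly this point: Lemma~\ref{lem:chainRuleAC} adds ``locally bounded'' as an explicit hypothesis rather than deriving it from Definition~\ref{def:conservative}. Either supply a proof of local boundedness for conservative fields, or weaken the first bullet's argument to an integrated bound $|f(y)-f(x)|\le\|y-x\|\int_0^1\max_{v\in D(x+t(y-x))}\|v\|\,\dd t$ and show that this segment integral is bounded uniformly over $x,y\in B(x_0,r)$; as written, the two bullets that depend on this step are not yet established.
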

	
	Chain rule proved in Lemma 2 of \cite{bolte2021conservative} characterizes conservativity in the following sense:
	\begin{lemma}[Chain rule and conservativity, \cite{bolte2021conservative}]
		\label{lem:chainRuleAC}
		Let $D \colon \RR^d \rightrightarrows \RR^d$ be a locally bounded, graph closed set-valued map and $f\colon \RR^d \to \RR$ a locally Lipschitz continuous function. Then $D$ is a conservative field for $f$, if and only if for any absolutely continuous curve $x \colon [0,1] \to \RR^d$, the function $t \mapsto f(x(t))$ satisfies 
		\begin{align}
			\frac{d}{d t} f(x(t)) = \left\langle v, \dot{x}(t) \right\rangle\qquad \forall v \in D(x(t)),
			\label{eq:chainRuleAC}
		\end{align}
		for almost all $t \in [0,1]$.
	\end{lemma}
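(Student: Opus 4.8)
The plan is to prove the two implications separately, using in both directions the elementary fact that $t \mapsto f(x(t))$ is absolutely continuous whenever $f$ is locally Lipschitz and $x$ is absolutely continuous, so that $f \circ x$ is differentiable almost everywhere and is recovered from its derivative by the fundamental theorem of calculus. Indeed, the image $x([0,1])$ is compact, $f$ is $L$-Lipschitz on a neighborhood of it, and $|f(x(t)) - f(x(s))| \le L\,\|x(t) - x(s)\|$ transfers the absolute continuity of $x$ to $f \circ x$. Throughout I will use that, since $D$ is graph closed, locally bounded, with nonempty compact values, the maps $t \mapsto \max_{v \in D(x(t))} \langle \dot x(t), v\rangle$ and $t \mapsto \min_{v\in D(x(t))}\langle\dot x(t),v\rangle$ are measurable, which follows from a measurable-maximum argument.

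For the direction ``$D$ conservative for $f$ $\Rightarrow$ chain rule,'' I would first localize. Fix an absolutely continuous curve $x$ and $0 \le a \le b \le 1$. Concatenating any fixed path from $0$ to $x(a)$ with the (reparametrized) restriction $x|_{[a,b]}$ and invoking that the potential of Definition~\ref{def:conservativeMapForF} is path independent, I obtain
\[
f(x(b)) - f(x(a)) = \int_a^b \max_{v\in D(x(t))}\langle \dot x(t), v\rangle\,\dd t = \int_a^b \min_{v\in D(x(t))}\langle \dot x(t), v\rangle\,\dd t,
\]
where the two equalities are the max- and min-forms of the potential and the change of variables under reparametrization is standard. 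Since the integrand $\max - \min$ is nonnegative and integrates to zero over \emph{every} subinterval $[a,b]$, it vanishes almost everywhere; hence for almost every $t$ the quantity $\langle \dot x(t), v\rangle$ takes a single common value $m(t)$ over all $v \in D(x(t))$. The displayed identity then reads $f(x(b)) - f(x(a)) = \int_a^b m(t)\,\dd t$ for all $a,b$, so $\frac{d}{dt} f(x(t)) = m(t) = \langle v, \dot x(t)\rangle$ for all $v \in D(x(t))$, almost everywhere, which is exactly \eqref{eq:chainRuleAC}.

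For the converse ``chain rule $\Rightarrow$ $D$ conservative for $f$,'' I would verify the two defining requirements. First, to obtain the loop condition of Definition~\ref{def:conservative}, take any absolutely continuous loop $\gamma$. The hypothesis gives $\frac{d}{dt} f(\gamma(t)) = \langle v, \dot\gamma(t)\rangle$ for all $v \in D(\gamma(t))$ almost everywhere, so in particular $\max_{v}\langle\dot\gamma(t),v\rangle = \frac{d}{dt}f(\gamma(t))$ a.e.; integrating and using $\gamma(0)=\gamma(1)$ yields $\int_0^1 \max_{v}\langle\dot\gamma,v\rangle\,\dd t = f(\gamma(1)) - f(\gamma(0)) = 0$, which, together with the hypothesized graph-closedness, local boundedness, and compact nonempty values, shows $D$ is a conservative field. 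Second, applying the same computation to an arbitrary absolutely continuous $\gamma$ with $\gamma(0)=0$ and $\gamma(1)=x$ gives $f(x) - f(0) = \int_0^1 \max_{v}\langle\dot\gamma,v\rangle\,\dd t$, i.e. $f$ is the potential of $D$ in the sense of Definition~\ref{def:conservativeMapForF}. Hence $D$ is a conservative field for $f$.

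The main obstacle I expect lies in the forward direction: rigorously passing from the global potential identity on $[0,1]$ to the localized identity on every subinterval $[a,b]$ (which requires path-independence and reparametrization invariance of the line integral), and justifying the measurability of the max/min integrands so that the ``$\max = \min$ almost everywhere'' step is legitimate. Once these measure-theoretic and path-concatenation technicalities are in place, the reduction to the fundamental theorem of calculus for the absolutely continuous function $f \circ x$ is routine.
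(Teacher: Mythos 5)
The paper does not prove this lemma: it is restated verbatim as Lemma 2 of \cite{bolte2021conservative}, so there is no in-paper argument to compare against, and the only meaningful benchmark is the original source, whose proof yours essentially reproduces. Your argument is correct: path-independence and additivity of the potential integral localize the identity to every subinterval, the nonnegative integrand $\max-\min$ having zero integral forces the two to coincide almost everywhere, and the fundamental theorem of calculus for the absolutely continuous map $f\circ x$ yields \eqref{eq:chainRuleAC}; the converse follows by integrating the chain rule over loops (giving the circulation condition) and over arbitrary paths from the origin (identifying $f$ as a potential).
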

	
	A locally Lipschitz continuous function, $f \colon \RR^d \to \RR$ is differentiable almost everywhere by Rademacher's theorem \citep{evans2015measure}. 
	
	\begin{definition}[Clarke subgradient, \cite{clarke2008nonsmooth}]
		\label{def:Clarke}
		For any function
		$F:\RR^d \rightarrow \RR$, we denote by $\cD_F$ the set of points $x\in
		\RR^d$ such that $F$ is differentiable at $x$.  Denote by $R \subset \RR^d$, the full measure set where $f$ is differentiable, and then the Clarke subgradient of $f$ is given for any $x \in \RR^d$, by
		\begin{equation}
			\label{eqn:Clarke}
			\partialc  F (x) = \conv \Big\{ y\in \RR^d\,:\,\exists (x_n)_{n\in \NN}\in \cD_F^\NN\ \text{ s.t. } (x_n,\nabla F(x_n)) \rightarrow (x,y)\Big\}\,.
		\end{equation}
	\end{definition}
	
	The following proposition collects the results of Theorem 1, Corollary 1, and Proposition 1 of \cite{bolte2021conservative}.
	\begin{proposition}[\cite{bolte2021conservative}]
		\label{bolte2021conservative_prop}
		Consider a conservative field $D \colon \RR^d \rightrightarrows \RR^d$ for $f \colon \RR^d \to \RR$. Then the following hold:
		\begin{enumerate}
			\item $D=\{\nabla f\}$ Lebesgue almost everywhere. \label{bolte2021conservative_prop_i}
			\item $\partialc f$ is a conservative field for $f$, and $\partialc f (x) \subset \conv(D(x))$ for all $x \in \RR^d$. 
			\item We say that $x$ is  $D$-critical for $f$ if $D(x)\ni 0$, and the value $f(x)$ is then called a $D$-critical value. When $D \colon \RR^d \rightrightarrows \RR^d$ has non-empty compact values and closed graph, for $x \in \RR^d$ being a local minimum or local maximum of $f$, we have $0 \in \conv(D(x))$.
		\end{enumerate}
	\end{proposition}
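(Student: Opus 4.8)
The plan is to establish the three claims in order, treating the first as the analytic heart and deriving the other two from it together with standard facts about the Clarke subgradient. For the first claim, I would show that a conservative field is forced to coincide with $\nabla f$ along almost every line and then lift this to a full-measure statement on $\RR^d$ by Fubini. Concretely, fix a unit vector $e$ and, for each $y$ in the hyperplane $e^\perp$, consider the absolutely continuous curve $t \mapsto y + te$. Lemma~\ref{lem:chainRuleAC} applied to this curve gives $\frac{\dd}{\dd t} f(y+te) = \langle v, e\rangle$ for every $v \in D(y+te)$ and almost every $t$. Wherever $f$ is differentiable this same derivative equals $\langle \nabla f(y+te), e\rangle$, and by Rademacher's theorem together with Fubini the nondifferentiability set meets almost every such line in a one-dimensional null set. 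Hence for almost every $x$ one obtains $\langle v - \nabla f(x), e\rangle = 0$ for all $v \in D(x)$. Taking a countable dense family of directions $\{e_k\}$ and intersecting the corresponding full-measure sets, I would conclude $v = \nabla f(x)$ for every $v \in D(x)$ at almost every $x$, i.e. $D(x) = \{\nabla f(x)\}$ almost everywhere. This Fubini-plus-chain-rule step is where I expect the main difficulty: one must argue measurability of the joint exceptional set and check that the per-line chain-rule null sets assemble, after integrating over the base hyperplane, into a genuine $\RR^d$-null set.

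For the second claim, I would first invoke Clarke's calculus fact \citep{clarke1990optimization} that $\partialc f$ may be computed while discarding any Lebesgue-null set, so that $\partialc f(x) = \conv\{\lim_n \nabla f(x_n) : x_n \to x,\ x_n \in R\}$, where $R$ is the full-measure set on which $f$ is differentiable and, by the first claim, $D(x) = \{\nabla f(x)\}$. For $x_n \in R$ we have $(x_n, \nabla f(x_n)) \in \operatorname{Graph}(D)$, so any limit $(x,y)$ of such pairs lies in $\operatorname{Graph}(D)$ because $D$ is graph closed; passing to convex hulls yields $\partialc f(x) \subset \conv(D(x))$. Since $\conv(D)$ is conservative by the preceding proposition (Remark 3 of \cite{bolte2021conservative}) and $\partialc f$ is graph closed with nonempty compact values, the subset property in that proposition shows $\partialc f$ is conservative. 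To see it is conservative \emph{for} $f$, I would apply the first claim to $\partialc f$ itself: its potential $h$ satisfies $\nabla h = \nabla f$ almost everywhere, since both equal the a.e. single value of $\partialc f$; as two locally Lipschitz functions with almost everywhere equal gradients differ by a constant, $h = f$ up to an additive constant.

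For the third claim, the definition of $D$-critical point is immediate, and the extremum statement follows from Fermat's rule for the Clarke subgradient: at a local minimum or maximum $x$ of $f$ one has $0 \in \partialc f(x)$, where the maximum case reduces to the minimum case for $-f$ via $\partialc(-f) = -\partialc f$. Combining this with the inclusion $\partialc f(x) \subset \conv(D(x))$ from the second claim gives $0 \in \conv(D(x))$, as desired.
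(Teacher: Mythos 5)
This proposition is imported from \cite{bolte2021conservative} (it collects Theorem 1, Corollary 1, and Proposition 1 of that work) and the paper gives no proof of its own, so there is nothing internal to compare your argument against. Your reconstruction is essentially the argument of the original source and is correct: the Fubini-over-lines argument combined with the chain-rule characterization of Lemma~\ref{lem:chainRuleAC} for the almost-everywhere identity $D=\{\nabla f\}$; the fact that $\partialc f$ can be computed after discarding any Lebesgue-null set \citep{clarke1990optimization}, together with graph-closedness of $D$, for the inclusion $\partialc f(x)\subset\conv(D(x))$ and hence (via the subset and convex-hull stability properties) conservativity of $\partialc f$; and Fermat's rule for the Clarke subgradient for the critical-point statement. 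The two points you flag as delicate---joint measurability of the exceptional set so that the per-line null sets assemble into a null set of $\RR^d$, and the identification of the potential of $\partialc f$ with $f$ up to a constant via a.e.\ equality of gradients of locally Lipschitz functions---are indeed the only places requiring care, and both go through as you outline.
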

	
	\begin{proposition}[\cite{bolte2021conservative}, Corollary 2]
		Let $f \colon \RR^d \to \RR$ be locally Lipschitz continuous, then the following assertions are equivalent:
		\begin{itemize}
			\item[(i)] $f$ is the potential of a conservative field on $\RR^d$.
			\item[(ii)] $\partial^cf$ is a conservative field. 
			\item[(iii)] $f$ has chain rule for the Clarke subgradient.
		\end{itemize} 
	\end{proposition}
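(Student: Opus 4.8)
The plan is to establish the three equivalences through the cycle $(i) \Rightarrow (ii) \Rightarrow (iii) \Rightarrow (i)$, leaning almost entirely on results already available in the excerpt, with a single genuinely substantive bridging argument. The implication $(i) \Rightarrow (ii)$ I expect to be immediate: if $f$ is the potential of some conservative field $D$, then the second item of Proposition \ref{bolte2021conservative_prop} asserts directly that $\partialc f$ is a conservative field (indeed a conservative field \emph{for} $f$), which is precisely $(ii)$.

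For the two remaining implications I would route everything through Lemma \ref{lem:chainRuleAC} applied with the choice $D = \partialc f$. Before invoking it I would first record the two hypotheses the lemma needs: that $\partialc f$ is locally bounded (on any neighborhood where $f$ is $L$-Lipschitz, every element of $\partialc f$ has norm at most $L$) and that it is graph closed (a standard consequence of Definition \ref{def:Clarke}, since $\partialc f$ is built as the convex hull of limits of gradients and is outer semicontinuous). Assertion $(iii)$, that $f$ has a chain rule for the Clarke subgradient, is exactly condition \eqref{eq:chainRuleAC} of that lemma for $D = \partialc f$; so the lemma yields the equivalence that $\partialc f$ is a conservative field \emph{for $f$} if and only if $(iii)$ holds. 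This immediately gives $(iii) \Rightarrow (i)$, because ``$\partialc f$ is a conservative field for $f$'' says in particular that $f$ is the potential of the conservative field $\partialc f$, which is $(i)$.

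The crux is $(ii) \Rightarrow (iii)$, and the only real obstacle is upgrading the bare statement that $\partialc f$ is conservative to the sharper statement that it is conservative \emph{for $f$}, so that the forward direction of Lemma \ref{lem:chainRuleAC} applies. Concretely, $(ii)$ guarantees only that $\partialc f$ admits \emph{some} potential $g$ in the sense of Definition \ref{def:conservativeMapForF}, and I would argue that $g = f$ up to an additive constant. By the first item of Proposition \ref{bolte2021conservative_prop} (statement \ref{bolte2021conservative_prop_i}) applied to the conservative field $\partialc f$ with potential $g$, one has $\partialc f = \{\nabla g\}$ Lebesgue-almost everywhere. On the other hand, at every point of differentiability of $f$---a full-measure set by Rademacher's theorem---the defining formula \eqref{eqn:Clarke} (taking the constant sequence $x_n \equiv x$) gives $\nabla f(x) \in \partialc f(x)$. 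Comparing these two facts on their common full-measure set forces $\nabla f = \nabla g$ almost everywhere, so $f - g$ is locally Lipschitz with almost-everywhere vanishing gradient and is therefore constant on the connected space $\RR^d$. Hence $f$ is a potential of $\partialc f$, i.e. $\partialc f$ is conservative for $f$, and the forward direction of Lemma \ref{lem:chainRuleAC} then delivers the chain rule $(iii)$.

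I expect this identification of the potential to be the main technical point, since it is where ``conservative'' must be promoted to ``conservative for the specific function $f$''; the remaining steps are direct citations of the preliminary propositions together with the routine verifications of local boundedness and graph-closedness of $\partialc f$. A minor point to handle carefully is the ``zero gradient almost everywhere implies constant'' step, which is valid here precisely because $f - g$ is locally Lipschitz and hence absolutely continuous along segments, so the almost-everywhere gradient integrates correctly.
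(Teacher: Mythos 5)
The paper does not actually prove this proposition: it is quoted verbatim as Corollary 2 of \cite{bolte2021conservative}, so there is no in-paper argument to compare against. Your cycle $(i)\Rightarrow(ii)\Rightarrow(iii)\Rightarrow(i)$ is correct and is essentially the route of the original reference: $(i)\Rightarrow(ii)$ is item 2 of Proposition \ref{bolte2021conservative_prop}, and the remaining two implications are the two directions of Lemma \ref{lem:chainRuleAC} applied to $D=\partialc f$, once local boundedness, nonempty compact values, and graph-closedness of $\partialc f$ are checked and, for $(ii)\Rightarrow(iii)$, the potential of $\partialc f$ is identified with $f$. That identification is the only delicate point and your treatment of it is sound: $\partialc f=\{\nabla g\}$ Lebesgue-a.e.\ by item 1 of Proposition \ref{bolte2021conservative_prop}, $\nabla f(x)\in\partialc f(x)$ on the Rademacher set, hence $\nabla(f-g)=0$ a.e. The one step you should spell out rather than gesture at is that a locally Lipschitz function with a.e.\ vanishing gradient on $\RR^d$ is constant; this needs a short Fubini argument (for almost every line in each coordinate direction the restriction is absolutely continuous with a.e.\ zero derivative, hence constant on that line, and continuity extends this to all of $\RR^d$), since a single segment is a null set in $\RR^d$ and the a.e.\ statement does not apply to it directly. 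With that line added, the proof is complete.
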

	
	\begin{definition}[$o$-minimal structure, \cite{Coste1999}]
		An $o$-minimal structure on $(\RR,+,\cdot)$ is a collection of sets
		$\mathcal{O} = (\mathcal{O}_{d})_{d \in \NN}$ where each $\mathcal{O}_{d}$ is itself a family of
		subsets of $\RR^d$, such that for each $d \in \NN$:
		\begin{enumerate}
			\item[(i)] $\mathcal{O}_{d}$ is stable by complementation, finite union, finite intersection and contains $\RR^d$.
			\item[(ii)]  if $A$ belongs to $\mathcal{O}_{d}$, then $A \times \RR$ and $\RR \times A$
			belong to $\mathcal{O}_{d+1}$;
			\item[(iii)]  if $\pi: \RR^{d+1} \to \RR^d$ is the canonical projection onto $\RR^d$ then,
			for any $A \in \mathcal{O}_{d+1}$, the set $\pi(A)$ belongs to $\mathcal{O}_{d}$;
			\label{it:algebraic}
			\item[(iv)]  $\mathcal{O}_{d}$ contains the family of real algebraic subsets of $\RR^d$, that is,
			every set of the form
			$\{ x \in \RR^d \mid g(x) = 0 \}$
			where $g: \RR^d \to \RR$ is a polynomial function;
			\item[(v)]  the elements of $\mathcal{O}_1$ are exactly the finite unions of  intervals.
		\end{enumerate}
	\end{definition}
	
	\begin{definition}[Tame functions, \cite{van1998tame}]
		A subset of $\RR^d$ which belongs to an $o$-minimal structure $\mathcal{O}$ is said to be
		definable in $\mathcal{O}$. The terminology tame refers to definability in an $o$-minimal structure. 
		A set-valued function is said to be definable in $\mathcal{O}$ whenever its graph is definable in $\mathcal{O}$.
	\end{definition}
	
	The simplest  o-minimal structure is given by the class of
	real  semialgebraic objects.
	Recall that a set $A \subset \RR^d$ is called {\em semialgebraic} if it is a finite union of sets of the form $$\displaystyle  \bigcap_{i=1}^k \Big\{x \in \RR^d \mid g_{i}(x) < 0, \; h_{i}(x) = 0 \Big\},$$
	where the functions $g_{i}, h_{i}: \RR^d \to \RR$ are real polynomial functions and $k\geq 1$.
	\begin{proposition}[\cite{bolte2021conservative}, Proposition 2]
		Let $f:\RR^d\to \RR$ be locally Lipschitz continuous, the following are sufficient conditions for $f$ to be potential of a conservative field on $\RR^d$:
		\begin{enumerate}
			\item  $f$  is convex or concave.
			\item  $f$ or $-f$  is Clarke regular.
			\item  $f$ or $-f$  is prox regular.
			\item  $f$ is real semialgebraic or more generally tame.
		\end{enumerate}	
	\end{proposition}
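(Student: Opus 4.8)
The plan is to route every case through the equivalence recorded just above the statement (Corollary~2 of \cite{bolte2021conservative}): a locally Lipschitz $f$ is the potential of a conservative field if and only if $\partialc f$ is conservative, equivalently if and only if $f$ obeys the chain rule for the Clarke subgradient, i.e.\ for every absolutely continuous $x\colon[0,1]\to\RR^d$ and almost every $t$,
\[
\frac{d}{dt}f(x(t)) = \langle v,\dot x(t)\rangle \qquad \forall\, v\in\partialc f(x(t)).
\]
Thus in each of the four cases it suffices to verify this chain rule. Two preliminary reductions shorten the work. First, since $\partialc(-f) = -\partialc f$ and $\frac{d}{dt}(-f)(x(t)) = -\frac{d}{dt}f(x(t))$, the chain rule for $f$ and for $-f$ are equivalent, so the alternatives ``$f$ or $-f$'' in (2) and (3) collapse to a single case. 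Second, convex functions are Clarke regular and prox-regular functions are Clarke regular (Poliquin--Rockafellar), so (1) and (3) reduce to (2); nevertheless I would give (1) its own short argument.

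For the convex case I would use the subgradient inequality $f(y)\ge f(x(t)) + \langle v, y-x(t)\rangle$, valid for all $v\in\partialc f(x(t))$ since the Clarke and convex subdifferentials coincide. Taking $y=x(t\pm h)$, dividing by $\pm h$, and letting $h\downarrow 0$ at a point where $s\mapsto f(x(s))$ is differentiable yields $\frac{d}{dt}f(x(t))\ge\langle v,\dot x(t)\rangle$ from the right and the reverse inequality from the left; equality for every $v$ follows. For the Clarke-regular case (which then also covers (1) and (3)) I would argue at a.e.\ $t$, namely where $x$ and $f\circ x$ are both differentiable. Using local Lipschitzness to replace $x(t+h)$ by $x(t)+h\dot x(t)$ up to $o(h)$, the one-sided derivative becomes the ordinary directional derivative $f'(x(t);\dot x(t))$, which by regularity equals the Clarke directional derivative $f^\circ(x(t);\dot x(t))=\max_{v\in\partialc f(x(t))}\langle v,\dot x(t)\rangle$; running the same computation with $h\downarrow 0$ from the left turns $\frac{d}{dt}f(x(t))$ into $-f^\circ(x(t);-\dot x(t)) = \min_{v\in\partialc f(x(t))}\langle v,\dot x(t)\rangle$. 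Since the max and the min coincide with the common value $\frac{d}{dt}f(x(t))$, the interval $\langle\partialc f(x(t)),\dot x(t)\rangle$ collapses to a point, giving the chain rule.

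The substantive case is the tame one, which I expect to be the main obstacle because the quantifier in the chain rule ranges over \emph{all} absolutely continuous curves, not merely definable ones, so the argument cannot be purely combinatorial in the o-minimal structure. My plan is to invoke the cell/stratification theory of o-minimal structures to produce a finite $C^1$ Whitney stratification $\{M_i\}$ of $\RR^d$ adapted to $f$, meaning each restriction $f|_{M_i}$ is $C^1$. The key analytic input is the projection formula for Clarke subgradients of stratifiable functions: for $x\in M_i$, the orthogonal projection of $\partialc f(x)$ onto the tangent space $T_xM_i$ is the single vector $\nabla(f|_{M_i})(x)$. Separately I would establish the tangency lemma: for an arbitrary absolutely continuous curve and a.e.\ $t$, if $x(t)\in M_i$ then $\dot x(t)\in T_{x(t)}M_i$ --- this follows at Lebesgue density points of $\{s:x(s)\in M_i\}$ by forming difference quotients of points lying on the $C^1$ manifold $M_i$ and passing to the limit $\dot x(t)$. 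Combining the two, for a.e.\ $t$ and every $v\in\partialc f(x(t))$ with $x(t)\in M_i$,
\[
\langle v,\dot x(t)\rangle = \langle \mathrm{proj}_{T_{x(t)}M_i}v,\ \dot x(t)\rangle = \langle\nabla(f|_{M_i})(x(t)),\dot x(t)\rangle = \frac{d}{dt}\big(f|_{M_i}\big)(x(t)) = \frac{d}{dt}f(x(t)),
\]
which is independent of $v$; the first equality uses tangency (the normal component of $v$ annihilates $\dot x(t)$), and the penultimate the $C^1$ chain rule on $M_i$.

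Gathering the cases, $\partialc f$ satisfies the chain rule in each instance, hence is conservative, and $f$ is the potential of the conservative field $\partialc f$. The delicate points to watch are the measure-theoretic justification that the ``bad'' transition set between strata is hit on a $t$-set of measure zero (so that a.e.\ $t$ lands in a single stratum at a density point with tangent velocity), and the precise form of the projection formula, which is where the genuine o-minimal content enters; the convex, Clarke-regular, and prox-regular cases are comparatively routine once the chain-rule reformulation is in hand.
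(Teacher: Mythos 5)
The paper does not prove this proposition at all: it is imported verbatim by citation as Proposition~2 of \cite{bolte2021conservative}, so there is no in-paper argument to compare against. Your reconstruction is nonetheless correct and follows the same route as the cited source: reduce everything via the Corollary~2 equivalence to the Clarke chain rule along absolutely continuous curves, handle the convex and Clarke-regular cases by two-sided difference quotients (using that $f\circ x$ and $x$ are differentiable a.e.\ and that local Lipschitzness lets you replace $x(t+h)$ by $x(t)+h\dot x(t)$), fold the prox-regular case into the regular one, and treat the tame case via a $C^1$ Whitney stratification, the projection formula of Bolte--Daniilidis--Lewis--Shiota, and tangency of $\dot x(t)$ to the active stratum at density points --- which is exactly the Davis--Drusvyatskiy--Kakade--Lee argument that \cite{bolte2021conservative} relies on. The only two places where you should be explicit rather than gesturing: (i) ``prox-regular $\Rightarrow$ Clarke regular'' needs the locally Lipschitz hypothesis (prox-regularity of a locally Lipschitz function gives local lower-$C^2$, hence lower-$C^1$, hence regularity; cite Poliquin--Rockafellar or Rockafellar--Wets), and (ii) in the tame case the equality $\langle\nabla(f|_{M_i})(x(t)),\dot x(t)\rangle=\frac{d}{dt}f(x(t))$ must itself be derived by differentiating along a sequence $t_k\to t$ with $x(t_k)\in M_i$ chosen at a Lebesgue density point, since the curve need not remain in the stratum; you flag both issues, so the sketch is sound.
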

	
	%
	%
	%
	
	\subsection{Stationary distribution}
	Consider a complete filtered probability space $(\Omega, \mathscr{F}, \mathbb{F}=\{\mathscr{F}_t\}_{t\geq0}, \mathbb{P})$ which supports an $\mathbb{F}$-adapted standard Brownian Motion $B$.
	\cite{difonzo2022stochastic} considered Langevin-type stochastic differential inclusions in the following form
	\begin{equation}
		\label{eqn:LSDI}
		d \widetilde{X}_t \in -\partialc U(\widetilde{X}_t)dt + \sqrt{2}d B_t,
	\end{equation}
	where $\partialc U$ is the Clarke subgradient defined in  \eqref{eqn:Clarke}.
	They showed some foundational results regarding the flow and asymptotic properties of \eqref{eqn:LSDI}. Specificially, in this continuous-time setting, they proved that there exists a strong solution to \eqref{eqn:LSDI} and the correspondence of a Fokker-Planck type equation to modeling the probability law associated with \eqref{eqn:LSDI}. Given that $\partialc U$ is the $\nabla U$ when the gradient exists, there is no doubt that although the Markov chain $\widetilde{X}$ possesses a unique stationary distribution, the distribution of its discrete discretization could differ from the target distribution \citep{Roberts1996Exponential}. We consider a further generalized setting using conservative field $D_U$ as follows:
	\begin{equation}
		\label{eqn:GLSDI}
		d X_t \in -D_U(X_t)dt + \sqrt{2}d B_t.
	\end{equation}	
	By Corollary 1 of \cite{bolte2021conservative}, restated as Proposition \ref{bolte2021conservative_prop}, the Clarke subgradient is a minimal convex conservative field, ensuring that the existence of a strong solution to \eqref{eqn:LSDI} implies the existence of a strong solution to \eqref{eqn:GLSDI}. 
	
	The MASLA is based on the Euler-Maruyama discretization of \eqref{eqn:GLSDI}, accepting or rejecting the proposal $\breve{Y}_{n+1}$ defined in \eqref{eqn:SELD}, using the MH scheme. Recall that $\cD_f$ denotes the set of points $x\in
	\RR^d$ where $f$ is differentiable.
	Then \eqref{eqn:SELD} can be rewritten as
	\begin{equation}
		\label{eqn:proposal_0}
		\breve{Y}_{n+1} = X_n - \gamma \beta(X_n)+ (2 \gamma)^{1/2} Z_{n+1}, 
	\end{equation}	
	where 
	\begin{equation}
		\begin{cases}
			\beta(x) = \nabla U(x)\quad&x\in \cD_U,  \\
			\beta(x) \in D_U(x)& \text{otherwise.}
		\end{cases}
	\end{equation}
	\begin{theorem}
		\label{thm:stationary}
		Assume that at $\lambda$-almost every point of $x$ in the domain of function $U$, 
		there exists an open neighborhood $V_x$ of $x$, where $U$ is twice continuously differentiable (i.e., $ U \in C^2 $).  If the distribution of the initial random variable $X_0$ is absolutely
		continuous w.r.t. the Lebesgue measure, the MASLA chain $\{X_n\}$ has $\pi$ as a stationary distribution and is reversible w.r.t. $\pi$ for almost every $\gamma \in \mathbb{R}_+$.
	\end{theorem}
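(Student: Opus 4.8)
The plan is to establish reversibility of the MASLA chain with respect to $\pi$ by verifying the detailed balance condition for the Metropolis--Hastings kernel, reducing everything to the standard MALA argument on the full-measure set where $U$ is $C^2$. The key observation is that the MH accept--reject mechanism automatically enforces reversibility with respect to $\pi$ for \emph{any} proposal kernel, provided the acceptance ratio in \eqref{eqn:acceptance_prob} is well defined. Thus the real content is to show that the proposal density associated with \eqref{eqn:proposal_0} is a genuine transition density to which the MH correction applies, despite $\beta(x)$ being drawn from the set-valued field $D_U(x)$ at the non-differentiable points.

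First I would exploit the hypothesis together with Proposition \ref{bolte2021conservative_prop}\eqref{bolte2021conservative_prop_i}: since $D_U = \{\nabla U\}$ Lebesgue-almost everywhere, and since by assumption $U$ is $C^2$ on a neighborhood of $\lambda$-almost every point, the set $N = \cD_U^c$ of non-differentiability (together with the bad set where the $C^2$ property fails) has Lebesgue measure zero. Because the proposal $\breve{Y}_{n+1}$ in \eqref{eqn:proposal_0} is obtained by adding a nondegenerate Gaussian $(2\gamma)^{1/2}Z_{n+1}$ to the deterministic shift $X_n - \gamma\beta(X_n)$, the proposal kernel $q_\gamma(x,\cdot)$ has the explicit Gaussian density
\begin{equation*}
q_\gamma(x,y) = \frac{1}{(4\pi\gamma)^{d/2}}\exp\!\left(-\frac{\|y - x + \gamma\beta(x)\|^2}{4\gamma}\right),
\end{equation*}
which is well defined for every $x$ because $\beta(x)$ is a single selected vector (even at points of $N$). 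I would then argue that the value of $\beta$ on the null set $N$ is irrelevant for the chain's law: since $X_0$ has a density with respect to $\lambda$ and each transition convolves with a Gaussian, the marginal law of $X_n$ remains absolutely continuous for all $n$, so $\mathbb{P}(X_n \in N) = 0$ for every $n$. Hence, almost surely, every visited state lies in the $C^2$ set where $\beta(x) = \nabla U(x)$ coincides with the classical gradient used in MALA.

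With this reduction in hand, the detailed balance verification is routine: one checks that
\begin{equation*}
\pi(x)\, q_\gamma(x,y)\, \alpha_\gamma(x,y) = \pi(y)\, q_\gamma(y,x)\, \alpha_\gamma(y,x)
\end{equation*}
for $\lambda\otimes\lambda$-almost every $(x,y)$, which follows immediately from the defining form \eqref{eqn:acceptance_prob} of $\alpha_\gamma$ applied to the Gaussian $q_\gamma$ above; integrating this identity against test functions and adding the rejection (diagonal) contribution yields reversibility of the full MH kernel, and reversibility implies that $\pi$ is stationary. The phrase ``for almost every $\gamma \in \mathbb{R}_+$'' I would handle by noting that the acceptance ratio $\pi(y)q_\gamma(y,x)/(\pi(x)q_\gamma(x,y))$ is well defined and the chain is non-degenerate except possibly for a measure-zero set of step sizes $\gamma$ at which some integrability or non-vanishing condition could fail; one invokes a Fubini-type argument in $\gamma$ to discard this exceptional set.

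The main obstacle I anticipate is \emph{not} the detailed balance algebra but the measure-theoretic justification that the arbitrary set-valued selection $\beta$ on the Lebesgue-null non-differentiable set $N$ does not corrupt reversibility. One must argue carefully that, although $\beta(x)$ for $x\in N$ is genuinely ambiguous (any element of $D_U(x)$ is admissible) and need not even be Borel measurable a priori, the a.e.-coincidence $D_U = \{\nabla U\}$ combined with the smoothing effect of the Gaussian convolution confines the chain to $\cD_U$ almost surely, so that the behaviour on $N$ affects the transition kernel only on a $\lambda$-null set and is therefore invisible to $\pi$, which is absolutely continuous. Making the measurability of $\beta$ and of $q_\gamma(\cdot,\cdot)$ precise, and confirming that the stationarity integral is unaffected by modifying an integrand on a null set, is the delicate point that ties the conservative-field theory of Section \ref{sec:Preliminaries} to the classical MALA reversibility argument.
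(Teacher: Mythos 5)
Your proposal is correct in outline and follows the same overall strategy as the paper: the detailed-balance algebra for the Metropolis--Hastings kernel is routine, and the real content is (i) propagating absolute continuity of the law of $X_n$ so that, by Proposition~\ref{bolte2021conservative_prop}\ref{bolte2021conservative_prop_i}, the chain almost surely never visits the non-differentiability set and the selection $\beta$ a.s.\ coincides with $\nabla U$, and (ii) exhibiting the proposal transition density so that $\alpha_\gamma$ is well defined. Where you genuinely diverge is in step (i). The paper replaces $\beta$ by the measurable surrogate $\nabla U\,\1_{\cD_U}$ in \eqref{eqn:proposal_s} (which also disposes of your measurability worry about the selection, via the a.s.\ identity \eqref{eqn:as_equ_proposal}), defines the set $\Gamma$ in \eqref{eqn:def_Gamma} of step sizes for which $\widetilde{Q}_\gamma$ preserves absolute continuity, and shows $\Gamma$ has full measure through a pseudo-Hessian/local-diffeomorphism argument plus Fubini in $\gamma$; this is exactly where the hypothesis that $U$ is $C^2$ near $\lambda$-a.e.\ point and the qualifier ``for almost every $\gamma$'' enter the proof. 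You instead argue that adding the nondegenerate Gaussian $(2\gamma)^{1/2}Z_{n+1}$ forces the proposal's marginal to be absolutely continuous, which is a simpler and valid route (with the minor caveat that the MH transition is not a pure convolution: you must also note that the rejection branch $X_{n+1}=X_n$ trivially preserves absolute continuity). The price of your shortcut is that your handling of ``for almost every $\gamma$'' is unmotivated: you gesture at an unspecified exceptional set and ``a Fubini-type argument'' without identifying either, whereas in the paper the excluded null set of step sizes is precisely $\Gamma^c$, characterized by $1/\gamma$ lying in the spectrum of the pseudo-Hessian $\cH(x)$. To make your write-up complete, either commit to the claim for every $\gamma>0$ (which your Gaussian-smoothing argument actually supports) or reproduce the paper's $\Gamma$-construction, and in either case work with the measurable proposal \eqref{eqn:proposal_s} rather than the raw selection $\beta$.
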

	
	\begin{proof}
		
		If the proposal transition density of $\breve{Y}_{n+1}$ exists, denoted by $q_{\gamma}$, 
		the transition probability density of the MASLA chain can be written as
		\begin{equation}
			\label{eqn:p_gamma}
			p_{\gamma}(x,y)=q_{\gamma}(x,y)\alpha_{\gamma}(x,y),
		\end{equation}
		where 	
		\begin{equation}
			\label{eqn:acceptance_prob_gamma}
			\alpha_{\gamma}(x,y) = 1 \wedge \frac{\pi(y) q_{\gamma}(y,x)}{\pi(x) q_{\gamma}(x,y)}.
		\end{equation}
		Without loss of generality, suppose that
		$$\pi(y) q_{\gamma}(y,x)>\pi(x) q_{\gamma}(x,y).$$ In this case, on one hand
		$$\pi(x) p_{\gamma}(x,y)=\pi(x) q_{\gamma}(x,y)\cdot 1,$$
		and on the other hand by \eqref{eqn:p_gamma}
		$$\pi(y) p_{\gamma}(y,x)=\pi(y) q_{\gamma}(y,x)\cdot \frac{\pi(x) q_{\gamma}(x,y)}{\pi(y) q_{\gamma}(y,x)}=\pi(x) q_{\gamma}(x,y).$$
		The above two equations yield the detailed balance condition 
		$$\pi(x)p_{\gamma}(x,y)=\pi(y)p_{\gamma}(y,x),$$
		which implies that $\pi$ is a stationary distribution of the MASLA chain $X_n$ and this chain is reversible w.r.t. $\pi$.  Hence, it suffices to show that the proposal transition density of $\breve{Y}_{n+1}$ indeed exists.
		
		We define a simplified version 
		\begin{equation}
			\label{eqn:proposal_s}
			\widetilde{Y}_{n+1} = X_n - \gamma \nabla U(X_n)\mathbbm{1}_{\cD_U}(X_n)+ (2 \gamma)^{1/2} Z_{n+1}.
		\end{equation}	
		The Markov transition kernel of \eqref{eqn:proposal_s} is a function $\widetilde{Q}_{\gamma}:\RR^d\times \mathcal{B}(\RR^d)\rightarrow [0,1]$ defined as 
		\begin{equation*}
			\widetilde{Q}_{\gamma}(x,dy)= \widetilde{q}_{\gamma}(x,\widetilde{y}) d\widetilde{y},
		\end{equation*}	
		where $\widetilde{q}_{\gamma}$ is defined as
		\begin{equation}
			\label{eq:widetilde_q}
			\widetilde{q}_{\gamma}(x,\widetilde{y}) = (4\pi \gamma)^{-d/2} \exp \Big\{-(4 \gamma)^{-1}\|\widetilde{y}-x+ \gamma \nabla U(x)\mathbbm{1}_{\cD_U}(x)\|^2\Big\}.
		\end{equation}
		Note that $\widetilde{Q}_{\gamma}$ as a Markov transition kernel satifies that $\widetilde{Q}_{\gamma}(\,\cdot\,,A)$ is measurable for any $A$ and $\widetilde{Q}_{\gamma}(x,\,\cdot\,)$ is a probability measure for any $x$. 
		For every measure $\rho\in {\mathcal P}(\mathbb R^d)$, we denote by $\rho \widetilde{Q}$ the measure given by 
		$$\rho \widetilde{Q}_{\gamma} = \int \rho(d x) \widetilde{Q}_{\gamma}(x, \cdot).$$  

		Let $\nu$ and $\nu'$ be two measures defined on the measurable space $(\Omega, \mathcal{F}, \mathbb{P})$. We write $\nu \ll \nu'$ to indicate that $\nu$ is absolutely continuous w.r.t. $\nu'$, meaning that $\nu'(A) = 0$ implies $\nu(A) = 0$ for all $A \in \mathcal{F}$. For any subset $K \subseteq \mathbb{R}^d$, define
		\begin{align}
			\label{eqn:def_CK}
			\mathcal{M}(K) =\Big\{ \nu \in \mathcal{M}(\mathbb{R}^d) : \nu \ll \lambda \text{ and } \operatorname{supp}(\nu) \subset K \Big\},
		\end{align}
		where $\operatorname{supp}(\nu)$ denotes the support of the measure $\nu$.
		Define $\Gamma$ as the set of all steps $\gamma>0$ such that $\widetilde{Q}_{\gamma}$ maps $\cM(\RR^d)$ into itself, i.e.,
		\begin{align}
			\label{eqn:def_Gamma}
			\Gamma = \Big\{\gamma\in \mathbb{R}_+\,:\, \forall \rho\in
			\cM(\RR^d),\ \rho\widetilde{Q}_{\gamma}\ll\lambda\Big\}.
		\end{align}
		Then, for any $\gamma\in \Gamma$, if the distribution of $X_n$ is Lebesgue-absolutely
		continuous (i.e. $\mu(X_n)\in\cM(\RR^d)$), then $\mu(\widetilde{Y}_{n+1} )\in\cM(\RR^d)$. 
		Next, by Proposition \ref{bolte2021conservative_prop}\ref{bolte2021conservative_prop_i} we have $\mathbb{P}(X_n\notin \cD_U)=0$, which yields that for any $n\in \mathbb{N}_+$,
		\begin{align}
			\label{eqn:as_equ_proposal}
			\mathbb{P}\big(\widetilde{Y}_{n+1}\neq \breve{Y}_{n+1}\mid X_n, Z_{n+1}\big)=0,
		\end{align}
		where $\breve{Y}_{n+1}$ is generated by \eqref{eqn:proposal_0}.
		Hence, $\mu(X_n)\in\cM(\RR^d)$ would also imply that $\mu(\breve{Y}_{n+1} )\in\cM(\RR^d)$, which further yields the existence of the proposal transition density 
		$q_{\gamma}(x,y)$ needed in \eqref{eqn:p_gamma}.
		
		So far we have established the existence for any $\gamma \in \Gamma$. The remaining question is how large the set $\Gamma$ is. Under the mild conditions assumed in this theorem ,we aim to show that the set $\Gamma^c \subset \mathbb{R}_+$ is Lebesgue negligible, i.e., it has Lebesgue measure zero. It suffices to show that for almost every $\gamma$,  
		we have that $g_{\gamma,z}$ defined as
		$$g_{\gamma,z}(x) = (x - \gamma \nabla U(x)) \1_{\cD_U}(x)+(2 \gamma)^{1/2}z$$
		is almost everywhere a local diffeomorphism.
		We define for each $x \in \RR^d$ the pseudo-hessian  $\cH(x) \in \RR^{d \times d}$ as
		\begin{equation*}
			\cH(x)_{i, j} = \limsup_{t \rightarrow 0} \frac{\scalarp{\nabla U(x + t e_j)\1_{\cD_U}(x+ te_j) - \nabla U(x)}{e_i}}{t} \1_{\cD_U}(x).
		\end{equation*}
		Since it is a limit of measurable functions, $\cH$ is $\cB(\RR^d) $ measurable, and if $U(\cdot)$ is twice differentiable at $x$ then $\cH(x)$ is just the ordinary hessian. Define
		\[
		l(x, \gamma) =
		\begin{cases}
			\det (\gamma \cH(x) - I), & \text{if every entry in } \cH(x) \text{ is finite}, \\
			1, & \text{otherwise},
		\end{cases}
		\]
		which is a $\cB(\RR^d)  \otimes \cB(\RR_{+})$ measurable function. By the inverse function theorem, we have that if $U(\cdot)$ is $C^2$ at $x$ and if $l(x, \gamma) \neq 0$, then $g_{\gamma,z}(\cdot)$ is a local diffeomorphism at $x$. Therefore $l(x, \gamma) \neq 0$ implies either $g_{\gamma,z}(\cdot)$ is a local diffeomorphism at $x$ or $U(\cdot)$ is not $C^2$ at $x$.
		Let $\lambda_{+}$ denote the usual measure on $\RR_{+}$. We  have by Fubini's theorem:
		\begin{equation*}
			\begin{split}
				\int \1_{l(x, \gamma) = 0}  \lambda(\dif x) \otimes \lambda_{+}(\dif \gamma) &= \int \lambda(\{ x: l(x,\gamma) = 0\}) \lambda_{+}(\dif \gamma)\\
				&=\int \int \1_{l(x, \gamma) = 0}  \lambda_{+}(\dif \gamma) \lambda(\dif x) \\
				&= 0,
			\end{split}
		\end{equation*}
		where the last equality comes from the fact that for $x$ fixed $l(x, \gamma) = 0$ only if $1/ \gamma$ is in the spectrum of $\cH(x)$ which is finite. Therefore, $\Gamma$ is a set of full measure and for $\gamma\in \Gamma$,
		$$\lambda(\{ x: l(x,\gamma) = 0\}) = 0.$$ 
		At last, for $A\subset \RR^d$, $\gamma \in \Gamma$ and $\rho \in \cM(\RR^d)$, by the definition of $\cM(\RR^d)$ in \eqref{eqn:def_CK}, 
		we have
		\begin{equation*}
			\rho \widetilde{Q}_{\gamma}(A) = \rho  \otimes\Phi(\{ (x, z) : g_{\gamma,z}(x) \in A\})  \leq \lambda \otimes \Phi(\{ (x, z) : g_{\gamma,z}(x) \in A\}),
		\end{equation*}
		where $\Phi$ is the cumulative distribution function (CDF) of the standard normal distribution.
		Then by Fubini's theorem,
		\begin{equation*}
			\begin{split}
				&\lambda \otimes \Phi(\{ (x, z) : g_{\gamma,z}(x) \in A\})\\
				&= \int \lambda\big(\{ (x, z) : g_{\gamma,z}(x) \in A\}\big) \Phi(\dif z) \\
				&= \int \lambda\Big(\big\{ (x, z) : g_{\gamma,z}(x) \in A \text{ and } g_{ \gamma,z}(\cdot) \text{ is a local diffeomorphism at } x\big\}\Big) \Phi(\dif z).
			\end{split}
		\end{equation*}
		By the separability of $\RR^d$, there is a countable family of open neighborhoods $(V_i)_{i \in \bbN}$ such that for any open set $O$ we have $O = \bigcup_{j \in J}V_j$. The set of $x$ where $g_{\gamma,z}(\cdot)$ is a local diffeomorphism is an open set, hence
		\begin{align*}
			&\Big\{ x: g_{\gamma,z}(x) \in A \text{ and } g_{\gamma,z}(\cdot) \text{ is a local diffeomorphism at } x\Big\}\\
			& = \bigcup_{i \in I} V_i \cap \{ x: g_{\gamma,z}(x) \in A \} \, .
		\end{align*}
		Since an image of a null set by a diffeomorphism is a null set we have
		\begin{equation*}
			\lambda\big(\{ x: g_{\gamma,z}(x) \in A \} \cap V_i\big) = 0,
		\end{equation*}
		which yields $\rho \widetilde{Q}_{\gamma}(A) = 0$ as desired. The proof is complete.
	\end{proof}

	\section{Numerical Analysis}
	\label{sec:Numerical}
	
	In this section, we compare our proposed MASLA algorithm with leading methods in scenarios where they are applicable, as detailed in Section \ref{sec:Comparative}, followed by an evaluation of MASLA’s strengths in cases where these methods are not applicable, as discussed in Section \ref{sec:nonapplicable}.
	
	\subsection{Comparison of MASLA with State-of-the-Art Algorithms}
	\label{sec:Comparative}
	
	\subsubsection{Target Distribution Description}
	To ensure a fair comparison, we adopt the exact experimental setup from \cite{habring2024subgradient} for sampling from a two-dimensional density with the potential 
	$$U(x) = F(x) + G(Kx),$$ 
	where $F: \mathbb{R}^d \rightarrow [0, \infty)$ and $G: \mathbb{R}^{d'} \rightarrow [0, \infty)$ are both proper, convex, and lower semi-continuous, and $K: \mathbb{R}^d \rightarrow \mathbb{R}^{d'}$ is a linear operator. The function $F$ is either Lipschitz continuous with constant $L_F$ or differentiable with a Lipschitz continuous gradient with constant $L_{\nabla F}$, while $G$ is convex and Lipschitz continuous with constant $L_G$.  Specifically, we consider the total variation (TV) functional for $G(Kx)$, defined in the two-dimensional case as $G: \mathbb{R} \to \mathbb{R}$ and $K: \mathbb{R}^2 \to \mathbb{R}$ by
	\[
	G(p) = \lambda |p|, \qquad Kx = x_2 - x_1,
	\]
	where $\lambda > 0$ is a scaling parameter.  It is straightforward to verify that $G$ is Lipschitz continuous with Lipschitz constant $L_G = \lambda$, and its subdifferential is given by
	\[
	\partial G(p) = 
	\begin{cases}
		[-\lambda, \lambda] & \text{if } p = 0, \\
		\{\text{sign}(p)\lambda\} & \text{otherwise}.
	\end{cases}
	\]
	The operator $K$ satisfies $\|K\|^2 \leq 2$. For $F: \mathbb{R}^2 \to \mathbb{R}$, we use the squared $L^2$ discrepancy, defined as
	\[
	F(x) = \frac{1}{2\sigma^2} \|x - y\|_2^2,
	\]
	which is $m$-strongly convex with parameter $m = \sigma^{-2}$ and has a Lipschitz continuous gradient with constant $L_{\nabla F} = \sigma^{-2}$. Following \cite{habring2024subgradient}, we set $y = (-1, 1)$, $\sigma = 1$, and $\lambda = 5$.

	In order to be
	able to approximate the distributions of iterates of the algorithms, we need i.i.d. samples of each iterate; thus, for each algorithm in this section we compute 1e4 independent Markov chains and then estimate the distribution of the kth iterate as a histogram based on these samples. The initial values are taken as $(-1, 1)$. Figure \ref{fig:true_MASLA} reports the comparison of the true density and MASLA-sampled distribution after the first step, which is the initial sample distribution, followed by the comparison of the true density and MASLA-sampled distribution after the 49,500 iterations. All experiments in this section has no burn-in discartion We can clearly see that the distribution is clustered around value -1 at the beginning, which is expected. After running the MASLA chain, the samples follow the true distribution.

	\begin{figure*}[t!]
		\centering
		{\includegraphics[width = 4.5in]{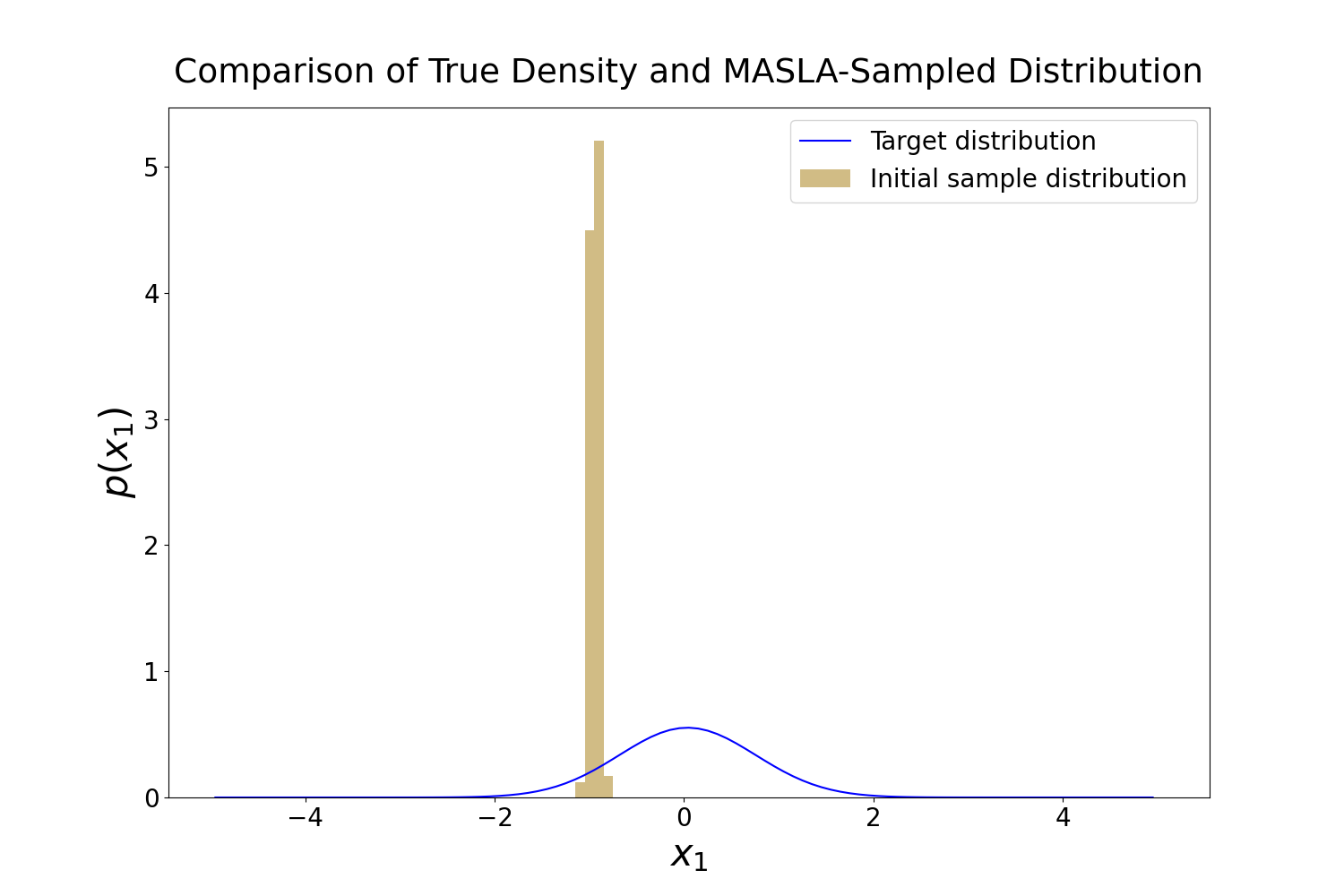}}\hfil
		{\includegraphics[width =4.5in]{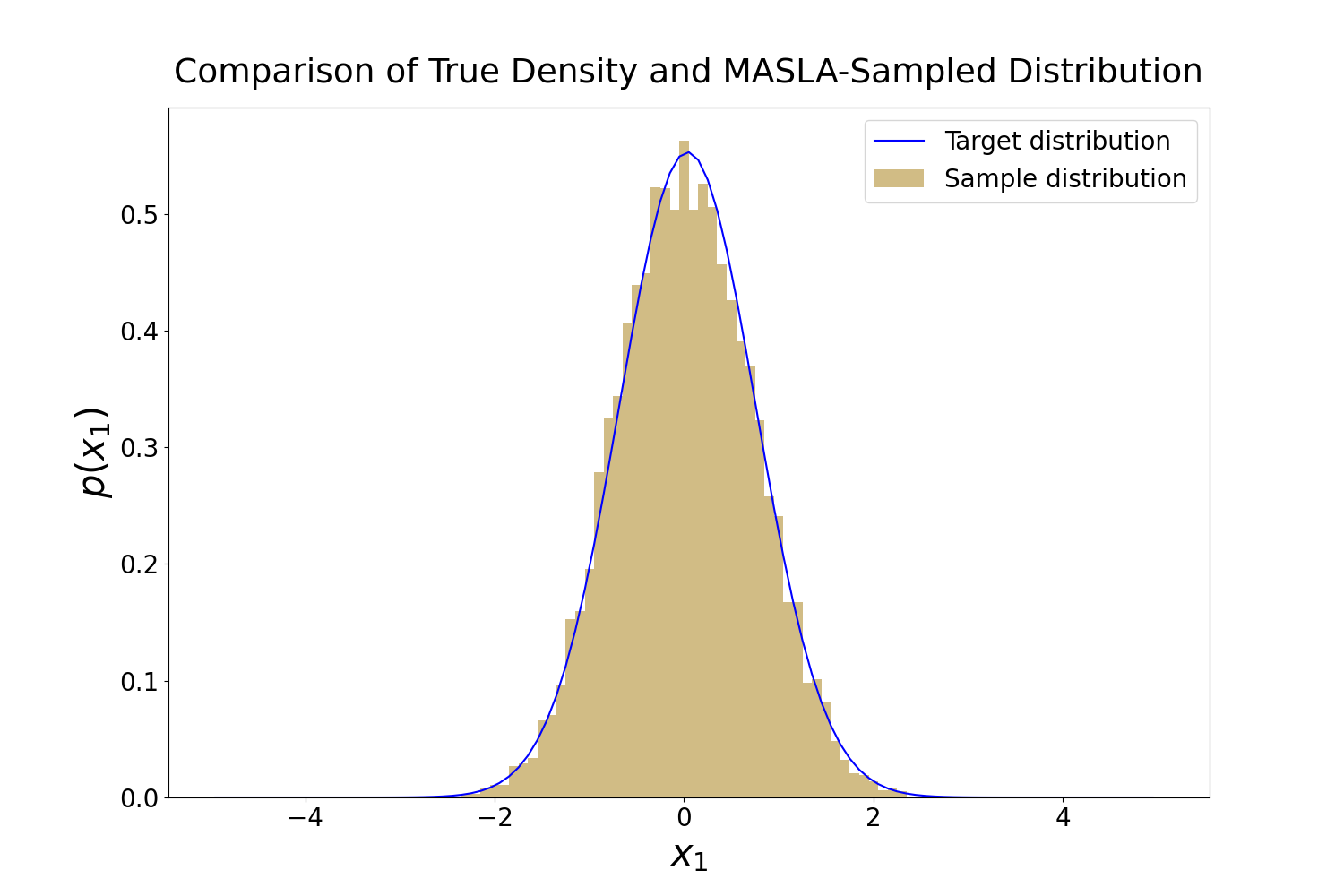}}
		\caption{Comparison of MASLA-sampled initial and final distributions with true density. The top plot shows the initial sample distribution after the first step, clustered around -1 as expected. The bottom plot displays the distribution after 49,500 iterations, demonstrating convergence to the true density. }
		\label{fig:true_MASLA}
	\end{figure*}
	
	\subsubsection{Comparison in Wasserstein Distance}
	\cite{habring2024subgradient} proposed two algorithms: the Proximal-subgradient Langevin algorithm (Prox-sub) and the Gradient-subgradient Langevin algorithm (Grad-sub). These methods, building on \cite{durmus2019analysis}, are designed for non-smooth potentials while achieving improved convergence for smooth cases.  In Figure 1 therein, the authors investigated the convergence of Grad-sub and Prox-sub in Wasserstein-2 distance (Wasserstein distance in short). It is a metric on the space of probability measures with finite second moments. Rooted in optimal transport theory, it captures the minimal cost of transporting mass to transform one distribution into another, where cost is measured by squared Euclidean distance. Formally, let $\mu$ and $\nu$ be two probability measures on $\mathbb{R}^d$ with finite second moments, i.e.,
	\[
	\int_{\mathbb{R}^d} \|x\|^2 \, d\mu(x) < \infty \quad \text{and} \quad \int_{\mathbb{R}^d} \|y\|^2 \, d\nu(y) < \infty.
	\]
	The Wasserstein distance between $\mu$ and $\nu$ is defined as
	\[
	W_2(\mu, \nu) := \left( \inf_{\zeta \in \zeta(\mu, \nu)} \int_{\mathbb{R}^d \times \mathbb{R}^d} \|x - y\|^2 \, d\zeta(x, y) \right)^{1/2},
	\]
	where $\zeta(\mu, \nu)$ denotes the set of all couplings of $\mu$ and $\nu$, i.e., all probability measures $\zeta$ on $\mathbb{R}^d \times \mathbb{R}^d$ such that
	\[
	\zeta(A \times \mathbb{R}^d) = \mu(A) \quad \text{and} \quad \zeta(\mathbb{R}^d \times B) = \nu(B)
	\]
	for all Borel sets $A, B \subset \mathbb{R}^d$.
	
	Prox-sub employs a proximal step for the smooth component $F$ and a subgradient step for the nonsmooth $G(Kx)$, ensuring effective handling of nondifferentiability. The update step is given by plugging $Y_{k+1}\in \partial G(K X_k)$ into:
	\begin{equation}
		\tag{Prox-sub}
		X_{k+1} = \prox_{\tau_{k+1} F}(X_k - \tau_k K^* Y_{k+1}) + \sqrt{2\tau_{k+1}} B_{k+1},
	\end{equation}
	where $\prox_{\tau F}$ is the proximal mapping of $F$ with step size $\tau$, $\partial G(K X_k)$ denotes a subgradient of $G$ at $K X_k$, $K^*$ is the adjoint of $K$, and $B_{k+1} \sim \mathcal{N}(0, I_d)$ is standard Gaussian noise, with $I_d \in \mathbb{R}^{d \times d}$ as the identity matrix. Grad-sub employs an explicit gradient step for the smooth component $F$ and a subgradient step for the nonsmooth $G(Kx)$. The update step is given by plugging $Y_{k+1}\in \partial G(K X_k)$ into:
	\begin{align*}
		\tag{Grad-sub}
		X_{k+1/2} &= X_k - \tau_k K^*Y_{k+1} \\
		X_{k+1} &= X_{k+1/2} - \tau_{k+1} \nabla F(X_k) + \sqrt{2\tau_{k+1} } B_{k+1},
	\end{align*}
	where $\nabla F(X_k)$ is the gradient of $F$ at $X_k$.  The formulation of both algorithms ensures nonasymptotic convergence in the Wasserstein-2 distance under appropriate regularity conditions.
	
	Comparison of MASLA, Grad-sub, and Prox-sub for $\TV-L^2$ Sampling in Wasserstein Distance is reported in Figure \ref{fig:Comparison_W2}. The Wasserstein distance between approximate distributions is computed using Python's POT package \citep{flamary2021pot}. Experiments were conducted with step sizes $\tau \in \{10^{-3}, 10^{-4}, 10^{-5}\}$. The source code for Grad-sub and Prox-sub was taken from \cite{habring2024subgradientsource}. Figure \ref{fig:Comparison_W2} shows that numerical convergence rates exceed theoretical expectations for all $\tau$ values. Experimental results indicate negligible differences in convergence rates among Prox-sub, Grad-sub, and MASLA, consistent with Figure 1 of \cite{habring2024subgradient}. All algorithms exhibit exponential ergodicity to the target density. 
	
	\begin{figure*}[t!]
		\centering
		{\includegraphics[width = 5in]{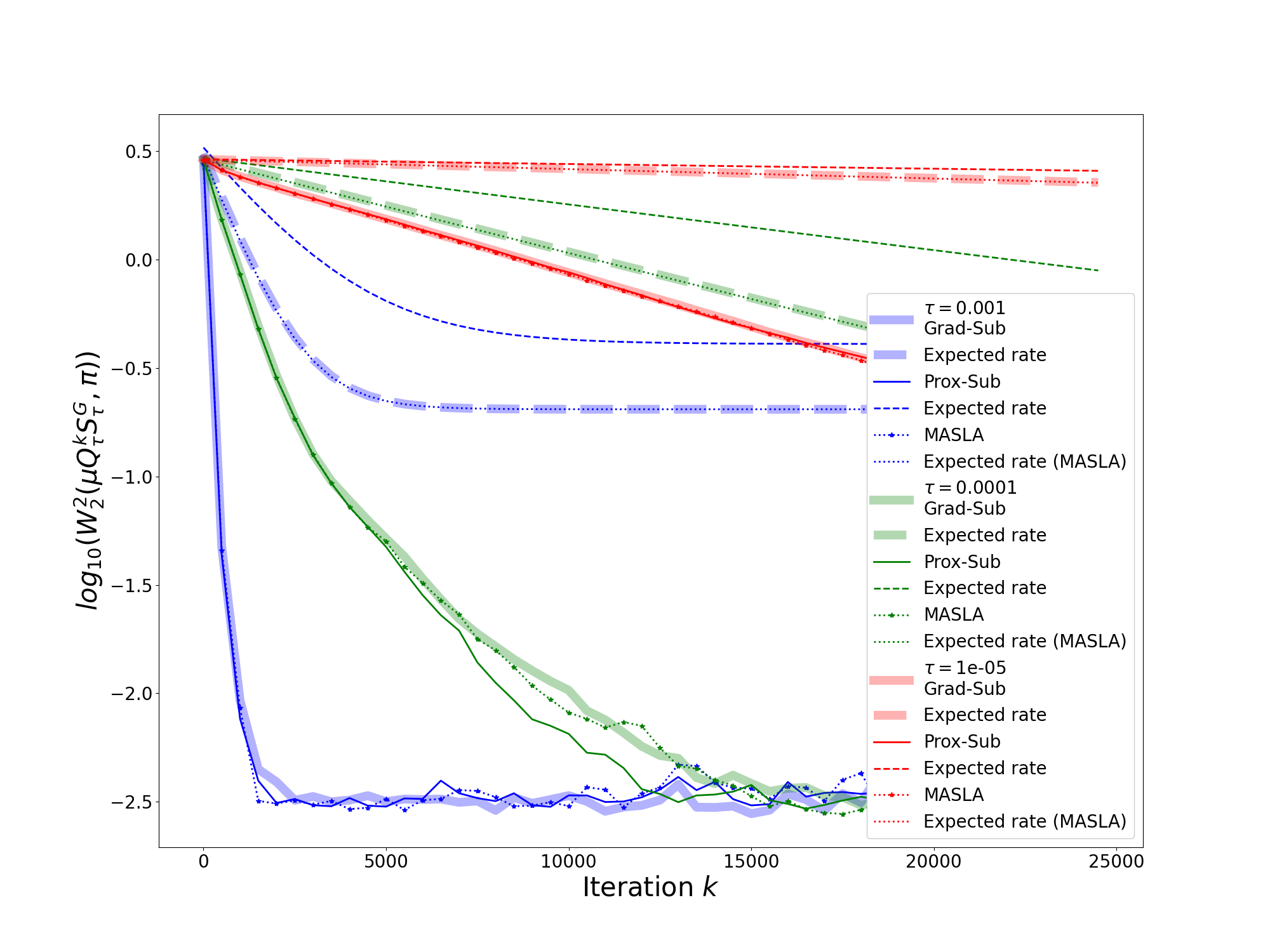}}
		\caption{Comparison of MASLA, Grad-sub \citep{habring2024subgradient}, and Prox-sub \citep{habring2024subgradient} for $\TV-L^2$ sampling in Wasserstein distance with $\sigma=1$, $\lambda=5$, and step sizes $\tau \in \{10^{-3}, 10^{-4}, 10^{-5}\}$. Colors denote different $\tau$ values, dashed lines represent theoretical convergence rates, and solid lines show empirical results. Thick transparent lines indicate Grad-sub, thin solid lines indicate Prox-sub, and dotted lines indicate MASLA.}
		\label{fig:Comparison_W2}
	\end{figure*}
	
	\subsubsection{Comparison in Total Variation Distance}
	In Figure 3 of \cite{habring2024subgradient}, the authors compared the results of Prox-sub with those obtained using the Proximal Metropolis-Adjusted Langevin Algorithm (P-MALA) by \cite{pereyra2016proximal} and the Moreau-Yosida Unadjusted Langevin Algorithm (MYULA) by  \cite{durmus2022proximal} in TV distance. The TV distance is a fundamental measure of dissimilarity between two probability distributions. Intuitively, it quantifies the maximum difference in probability that the two measures assign to the same event.
	Let $\mu$ and $\nu$ be two probability measures on a measurable space $(\Omega, \mathcal{F})$. The \emph{total variation distance} between $\mu$ and $\nu$ is defined as
	\[
	\mathrm{TV}(\mu, \nu) = \sup_{A \in \mathcal{F}} |\mu(A) - \nu(A)|.
	\]
	If $\mu$ and $\nu$ are both absolutely continuous w.r.t. a common measure $\lambda$ (e.g., Lebesgue measure), with densities $f = \frac{d\mu}{d\lambda}$ and $g = \frac{d\nu}{d\lambda}$ respectively, then the TV distance can also be written as
	\[
	\mathrm{TV}(\mu, \nu) = \frac{1}{2} \int_\Omega |f(x) - g(x)| \, d\lambda(x).
	\]
	The TV distance takes values in $[0,1]$, where $0$ indicates the distributions are identical, and $1$ indicates they are mutually singular.
	
	\begin{figure*}[t!]
		\centering
		{\includegraphics[width = 5in]{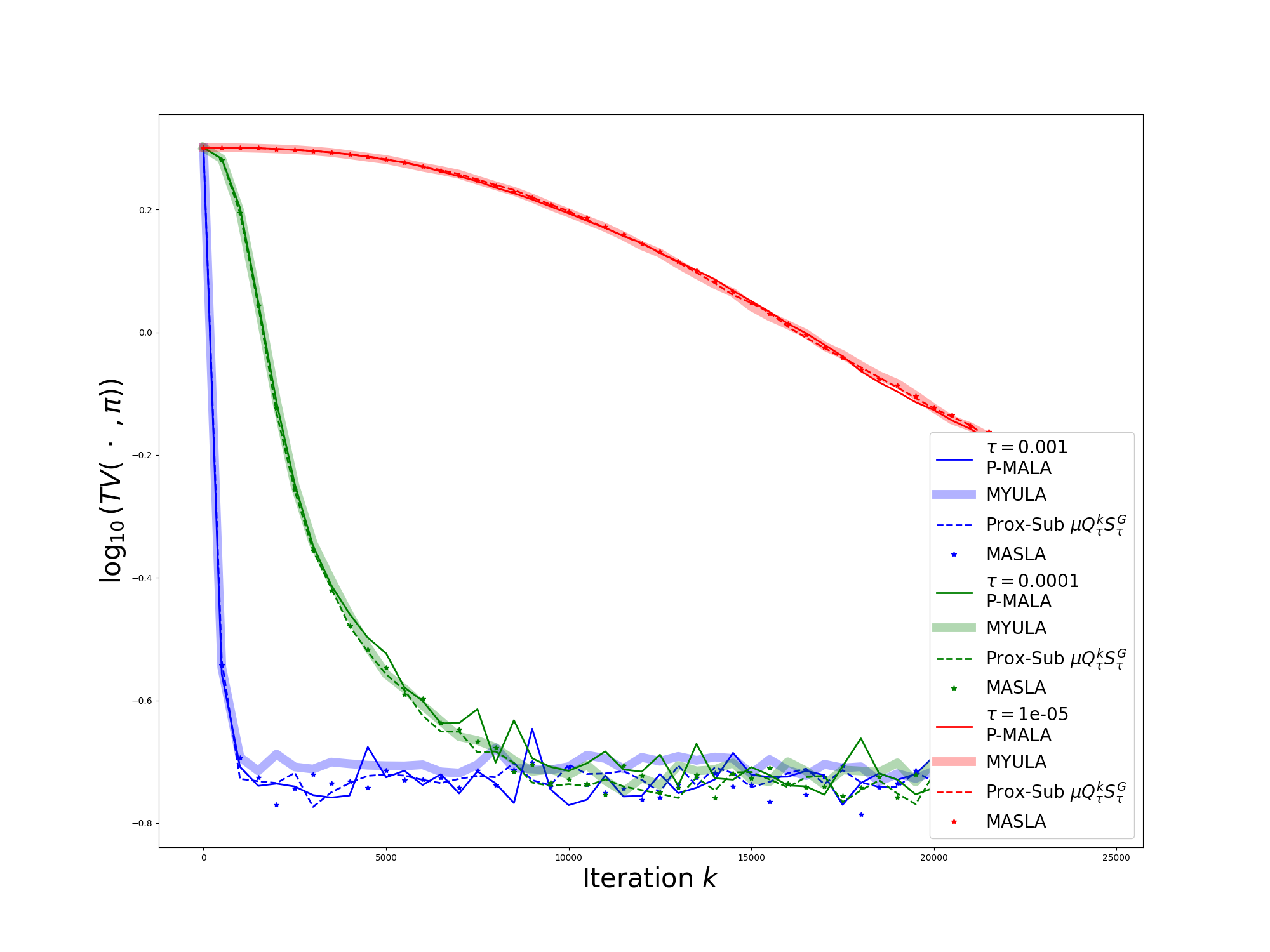}}
		\caption{Comparison of MASLA, Prox-sub \citep{habring2024subgradient}, P-MALA \citep{pereyra2016proximal}, and MYULA \citep{durmus2022proximal} for $\TV-L^2$ sampling in TV distance with $\sigma=1$, $\lambda=5$, and step sizes $\tau \in \{10^{-3}, 10^{-4}, 10^{-5}\}$. Colors represent different $\tau$ values. Thick transparent lines denote MYULA results, thin solid lines denote P-MALA, dashed lines denote Prox-sub, and dotted lines denote MASLA.}
		\label{fig:Comparison_TV}
	\end{figure*}
	
	For P-MALA, the proposed update is
	\begin{equation}\label{eq:MALA}
		\tag{P-MALA}
		X^* = \prox_{\tau (F + G \circ K)}(X_k) + \sqrt{2\tau} B_{k+1},
	\end{equation}
	accepted as $X_{k+1}$ with probability $\min\left\{1, \frac{\pi(X^*) q(X_k | X^*)}{\pi(X_k) q(X^* | X_k)}\right\}$, where 
	$$q(x | y) = \mathcal{N}\Big(x; \prox_{\tau (F + G \circ K)}(y), 2\tau I_d\Big),$$
	and $\mathcal{N}(x; \mu, \Sigma)$ denotes the multivariate Gaussian density with mean $\mu$ and covariance $\Sigma$ evaluated at $x$, with $I_d \in \mathbb{R}^{d \times d}$ as the identity matrix; otherwise, $X_{k+1} = X_k$. For MYULA, the update is
	\begin{equation}
		\tag{MYULA}
		X_{k+1} = \left(1 - \frac{\tau}{\theta}\right) X_k - \tau \nabla F(X_k) + \frac{\tau}{\theta} \prox_{\theta G \circ K}(X_k) + \sqrt{2\tau} B_{k+1},
	\end{equation}
	where $\theta > 0$ is the Moreau-Yosida parameter and the step size satisfies $\tau \leq \frac{\theta}{\theta L_{\nabla F} + 1}$. MYULA approximates the non-smooth term $G \circ K$ with its differentiable Moreau envelope
	\begin{equation}
		(G \circ K)^\theta(x) = \min_{z} \frac{1}{2\theta} \|z - x\|^2 + G(Kz),
	\end{equation}
	which converges to $G \circ K$ as $\theta \to 0$. Applying the ULA to the potential $U^\theta = F + (G \circ K)^\theta$ yields the MYULA iteration, approximating the target density in total variation as $\theta \to 0$. We follow  \cite{habring2024subgradient} in setting $\theta = 0.01$, which achieved favorable convergence.
	
	In Figure 3 of \cite{habring2024subgradient}, the authors evaluated the TV distance for Prox-sub using both the iterates $(\nu_k^N)_k$ with $N = 0$ and the measure $\mu Q_k^\tau S\tau^G$, with the latter demonstrating superior performance across all step sizes $\tau$. Here, we compare only the optimal results of Prox-sub. Figure \ref{fig:Comparison_TV} illustrates the comparison of MASLA with Prox-sub \citep{habring2024subgradient}, P-MALA \citep{pereyra2016proximal}, and MYULA \citep{durmus2022proximal} for $\TV-L^2$ sampling in TV distance. Experiments were conducted with step sizes $\tau \in {10^{-3}, 10^{-4}, 10^{-5}}$. As shown in Figure \ref{fig:Comparison_W2}, all algorithms exhibit exponential ergodicity to the target density for all tested $\tau$ values.

	\subsection{MASLA Performance in Scenarios Beyond Competing Methods}
	\label{sec:nonapplicable}
	
	So far we have shown that MASLA matches the superior performance when those leading algorithms are applicable. Now we are going to reveal MASLA’s strengths in cases where these methods are not applicable. 	We present a simple example of a function that is locally Lipschitz, does not have a well-defined proximal operator everywhere (in the sense of being single-valued), and has an explicit Clarke subgradient: \( f(x) = |x^2 - 1| \), defined for all \( x \in \mathbb{R} \).
	
	\subsubsection{Locally Lipschitz Property}
	
	A function \( f \) is locally Lipschitz at \( x_0 \in \mathbb{R} \) if there exists a neighborhood of \( x_0 \) and a constant \( L > 0 \) such that \( |f(x) - f(y)| \leq L |x - y| \) for all \( x, y \) in that neighborhood. To verify this for \( f(x) = |x^2 - 1| \):
	
	\begin{itemize}
		\item \textit{Derivative where differentiable}: The function is differentiable except at \( x = \pm 1 \), where \( x^2 = 1 \). For \( x^2 \neq 1 \):
		\begin{itemize}
			\item If \( x^2 > 1 \) (\( x > 1 \) or \( x < -1 \)), then \( f(x) = x^2 - 1 \), so \( f'(x) = 2x \).
			\item If \( x^2 < 1 \) (\( -1 < x < 1 \)), then \( f(x) = -(x^2 - 1) = 1 - x^2 \), so \( f'(x) = -2x \).
		\end{itemize}
		The derivative \( f'(x) = \pm 2x \) is continuous in regions where \( f \) is differentiable. In a compact interval \( [a, b] \), \( |f'(x)| = |2x| \leq 2 \max(|a|, |b|) \). By the mean value theorem, for \( x, y \) in a differentiable region, there exists \( \xi \) such that:
		\[
		|f(x) - f(y)| = |f'(\xi)| |x - y| \leq 2 \max(|a|, |b|) |x - y|.
		\]
		\item \textit{Non-differentiable points}: At \( x = 1 \), for small \( h \), $$ f(1 + h) = |(1 + h)^2 - 1| = |2h + h^2| \approx 2|h| ,$$ and \( f(1) = 0 \). Thus:
		\[
		|f(1 + h) - f(1)| \approx 2|h|,
		\]
		indicating Lipschitz continuity with constant approximately 2. Similarly at \( x = -1 \). Hence, \( f \) is locally Lipschitz on \( \mathbb{R} \), with the Lipschitz constant depending on the neighborhood size due to the linear growth of \( f'(x) \).
	\end{itemize}
	
	\subsubsection{Proximal Operator Not Well-Defined Everywhere}
	
	The proximal operator of \( f \) at \( y \in \mathbb{R} \) with parameter \( \lambda > 0 \) is:
	\[
	\text{prox}_{\lambda f}(y) = \arg\min_{x \in \mathbb{R}} \left\{ f(x) + \frac{1}{2\lambda} (x - y)^2 \right\}.
	\]
	For \( \text{prox}_{\lambda f}(y) \) to be well-defined, the minimizer must be unique. Consider \( y = 0 \), \( \lambda = 1 \):
	\[
	\phi(x) := |x^2 - 1| + \frac{1}{2} x^2.
	\]
	\begin{itemize}
		\item For \( x^2 < 1 \), \( |x^2 - 1| = 1 - x^2 \), so:
		\[
		\phi(x) = 1 - x^2 + \frac{1}{2} x^2 = 1 - \frac{1}{2} x^2.
		\]
		\item For \( x^2 > 1 \), \( |x^2 - 1| = x^2 - 1 \), so:
		\[
		\phi(x) = x^2 - 1 + \frac{1}{2} x^2 = \frac{3}{2} x^2 - 1.
		\]
		\item At \( x = \pm 1 \):
		\[
		\phi(1) = |1^2 - 1| + \frac{1}{2} \cdot 1^2 = 0 + \frac{1}{2} = \frac{1}{2}, \qquad \phi(-1) = \frac{1}{2}.
		\]
	\end{itemize}
	The function \( \phi(x) \) is maximized at \( x = 0 \) (\( \phi(0) = 1 \)) for \( |x| < 1 \), decreases to \( \phi(\pm 1) = \frac{1}{2} \), and increases for \( |x| > 1 \). The global minima occur at \( x = \pm 1 \), yielding:
	\[
	\text{prox}_{f}(0) = \{ -1, 1 \}.
	\]
	Since the minimizer is not unique, the proximal operator is not single-valued at \( y = 0 \), and thus not well-defined everywhere as a single-valued mapping, due to the non-convexity of \( f \).
	
	\subsubsection{Explicit Clarke Subgradient}
	
	By the Clarke subgradient definition provided in Definition \ref{def:Clarke}, since \( f \) is differentiable except at \( x = \pm 1 \), 
	\begin{itemize}
		\item For \( x > 1 \) or \( x < -1 \), \( f'(x) = 2x \), so \( \partial f(x) = \{ 2x \} \).
		\item For \( -1 < x < 1 \), \( f'(x) = -2x \), so \( \partial f(x) = \{ -2x \} \).
		\item At \( x = 1 \):
		\begin{itemize}
			\item Right limit (\( x_i \to 1^+ \)): \( f'(x_i) = 2x_i \to 2 \).
			\item Left limit (\( x_i \to 1^- \)): \( f'(x_i) = -2x_i \to -2 \).
			\item \( \partial f(1) = \text{conv} \{ -2, 2 \} = [-2, 2] \).
		\end{itemize}
		\item At \( x = -1 \):
		\begin{itemize}
			\item Right limit (\( x_i \to -1^+ \)): \( f'(x_i) = -2x_i \to 2 \).
			\item Left limit (\( x_i \to -1^- \)): \( f'(x_i) = 2x_i \to -2 \).
			\item \( \partial f(-1) = \text{conv} \{ -2, 2 \} = [-2, 2] \).
		\end{itemize}
	\end{itemize}
	Thus, the Clarke subgradient is:
	\[
	\partial f(x) =
	\begin{cases} 
		\{ 2x \} & \text{if } x < -1 \text{ or } x > 1, \\
		[-2, 2] & \text{if } x = \pm 1, \\
		\{ -2x \} & \text{if } -1 < x < 1.
	\end{cases}
	\]
	
	\subsubsection{The Strength of the MH Correction}
	The multi-valued nature of the proximal operator hinders  the application of proximal-based methods, however we can still compare with the algorithmic version without the MH correction to numerically visualize the strength of the MH correction. \cite{difonzo2022stochastic} focused on the theoretical analysis of Stochastic Langevin Differential Inclusions of continuous-time while they ran unadjusted Langevin dynamics with the Euler-Maruyama discretization. Here, we refer it as Unadjusted Subdifferential Langevin Algorithm (USLA), which simply draws a subgradient from the set-valued field at each iteration:
	\begin{equation}
		\label{eqn:USLA}
		\breve{X}_{n+1} = \breve{X}_n - \gamma \beta(\breve{X}_n) + \sqrt{2 \gamma}\, Z_{n+1}, \qquad \beta(\breve{X}_n) \in D_U(\breve{X}_n),
	\end{equation}
	We initialized both USLA and MASLA chains at $x = 0$, ran them for 100,000 iterations with a step size of $\tau = 0.1$, and discarded the first 20\% of iterations as burn-in. Figure \ref{fig:usla_vs_masla} compares the performance of USLA and MASLA chains against the true density. MASLA closely aligns with the true density, whereas USLA deviates significantly. The total variation (TV) distance errors are 0.116761 for USLA and 0.014363 for MASLA, while the Wasserstein-2 distance errors are 0.092183 for USLA and 0.008199 for MASLA. The computation times are 0.9976 seconds for USLA and 1.5361 seconds for MASLA.
	
	\begin{figure*}[t!]
		\centering
		{\includegraphics[width = 4.5in]{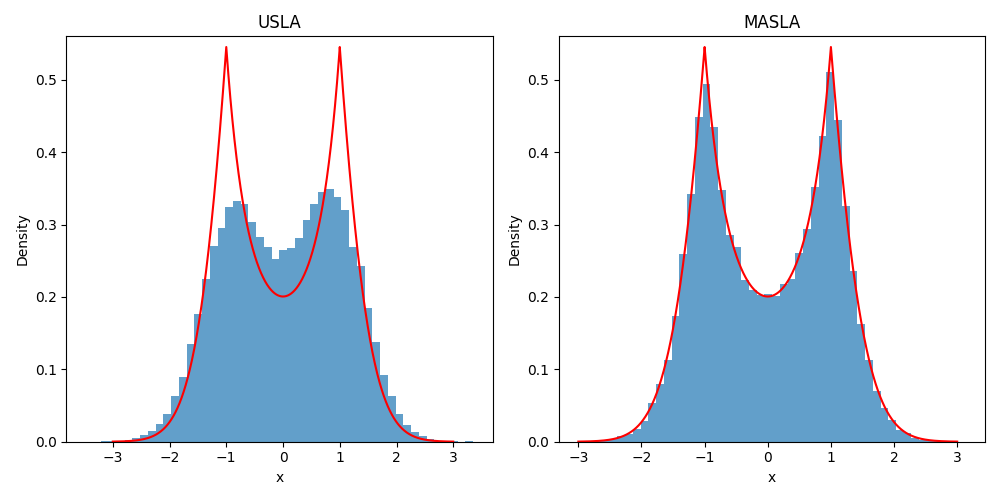}}
		\caption{Comparison the performance of USLA and MASLA chains with true density. Both chains were started with the same initial value, ran for 100,000 iterations with step size $\tau=0.1$, and then discarded $20\%$ as burn-in. }
		\label{fig:usla_vs_masla}
	\end{figure*}

\bibliography{bib-ms}

\begin{thebibliography}{}

\bibitem[Bernton, 2018]{bernton2018langevin}
Bernton, E. (2018).
\newblock {L}angevin {M}onte {C}arlo and {JKO} splitting.
\newblock In {\em Proceedings of the 31st Conference on Learning Theory},
  volume~75 of {\em Proceedings of Machine Learning Research}, pages
  1777--1798. PMLR.

\bibitem[Bolte and Pauwels, 2021]{bolte2021conservative}
Bolte, J. and Pauwels, E. (2021).
\newblock Conservative set valued fields, automatic differentiation, stochastic
  gradient methods and deep learning.
\newblock {\em Mathematical Programming}, 188:19--51.

\bibitem[Bottou et~al., 2018]{bottou2018optimization}
Bottou, L., Curtis, F.~E., and Nocedal, J. (2018).
\newblock Optimization methods for large-scale machine learning.
\newblock {\em SIAM Review}, 60(2):223--311.

\bibitem[Carvalho et~al., 2008]{carvalho2008high}
Carvalho, C.~M., Chang, J., Lucas, J.~E., Nevins, J.~R., Wang, Q., and West, M.
  (2008).
\newblock High-dimensional sparse factor modeling: applications in gene
  expression genomics.
\newblock {\em Journal of the American Statistical Association},
  103(484):1438--1456.

\bibitem[Chaari et~al., 2016]{chaari2016hamiltonian}
Chaari, L., Tourneret, J.-Y., Chaux, C., and Batatia, H. (2016).
\newblock A {H}amiltonian {M}onte {C}arlo method for non-smooth energy
  sampling.
\newblock {\em IEEE Transactions on Signal Processing}, 64(21):5585--5594.

\bibitem[Clarke, 1990]{clarke1990optimization}
Clarke, F.~H. (1990).
\newblock {\em Optimization and nonsmooth analysis}.
\newblock SIAM.

\bibitem[Clarke et~al., 2008]{clarke2008nonsmooth}
Clarke, F.~H., Ledyaev, Y.~S., Stern, R.~J., and Wolenski, P.~R. (2008).
\newblock {\em Nonsmooth analysis and control theory}, volume 178.
\newblock Springer Science \& Business Media.

\bibitem[Coste, 1999]{Coste1999}
Coste, M. (1999).
\newblock {\em An Introduction to o-minimal geometry}.
\newblock RAAG notes. Institut de Recherche Mathématique de Rennes.

\bibitem[Dalalyan et~al., 2018]{dalalyan2018exponentially}
Dalalyan, A., Grappin, E., and Paris, Q. (2018).
\newblock On the exponentially weighted aggregate with the {L}aplace prior.
\newblock {\em The Annals of Statistics}, 46(5):2452--2478.

\bibitem[Deng et~al., 2020]{deng2020non}
Deng, W., Feng, Q., Gao, L., Liang, F., and Lin, G. (2020).
\newblock Non-convex learning via replica exchange stochastic gradient {MCMC}.
\newblock In {\em International Conference on Machine Learning}, pages
  2474--2483. PMLR.

\bibitem[Difonzo et~al., 2022]{difonzo2022stochastic}
Difonzo, F.~V., Kungurtsev, V., and Marecek, J. (2022).
\newblock Stochastic {Langevin} differential inclusions with applications to
  machine learning.
\newblock {\em arXiv preprint arXiv:2206.11533}.

\bibitem[Durmus et~al., 2019]{durmus2019analysis}
Durmus, A., Majewski, S., and Miasojedow, B. (2019).
\newblock Analysis of {Langevin} {Monte Carlo} via convex optimization.
\newblock {\em Journal of Machine Learning Research}, 20(73):1--46.

\bibitem[Durmus et~al., 2018]{durmus2018efficient}
Durmus, A., Moulines, {\'E}., and Pereyra, M. (2018).
\newblock Efficient {Bayesian} computation by proximal {Markov} chain {Monte
  Carlo}: when {Langevin} meets {Moreau}.
\newblock {\em SIAM Journal on Imaging Sciences}, 11(1):473--506.

\bibitem[Durmus et~al., 2022]{durmus2022proximal}
Durmus, A., Moulines, {\'E}., and Pereyra, M. (2022).
\newblock A proximal {Markov} chain {Monte Carlo} method for {B}ayesian
  inference in imaging inverse problems: When {L}angevin meets {M}oreau.
\newblock {\em SIAM Review}, 64(4):991--1028.

\bibitem[Dytso et~al., 2018]{dytso2018analytical}
Dytso, A., Bustin, R., Poor, H.~V., and Shamai, S. (2018).
\newblock Analytical properties of generalized {G}aussian distributions.
\newblock {\em Journal of Statistical Distributions and Applications}, 5:1--40.

\bibitem[Evans and Gariepy, 2015]{evans2015measure}
Evans, L.~C. and Gariepy, R.~F. (2015).
\newblock {\em Measure Theory and Fine Properties of Functions}.
\newblock Chapman and Hall/CRC, London, revised edition.

\bibitem[Flamary et~al., 2021]{flamary2021pot}
Flamary, R., Courty, N., Gramfort, A., Alaya, M.~Z., Boisbunon, A., Chambon,
  S., and Chapel, L. (2021).
\newblock {POT: Python Optimal Transport}.
\newblock {\em Journal of Machine Learning Research}, 22:1--8.

\bibitem[Habring et~al., 2024a]{habring2024subgradientsource}
Habring, A., Holler, M., and Pock, T. (2024a).
\newblock {Source Code to Reproduce the Results of ``Subgradient Langevin
  Methods for Sampling from Nonsmooth Potentials''}.
\newblock \url{https://github.com/habring/subgradient_langevin}.

\bibitem[Habring et~al., 2024b]{habring2024subgradient}
Habring, A., Holler, M., and Pock, T. (2024b).
\newblock Subgradient {Langevin} methods for sampling from nonsmooth
  potentials.
\newblock {\em SIAM Journal on Mathematics of Data Science}, 6(4):897--925.

\bibitem[Jain et~al., 2017]{jain2017non}
Jain, P., Kar, P., et~al. (2017).
\newblock Non-convex optimization for machine learning.
\newblock {\em Foundations and Trends{\textregistered} in Machine Learning},
  10(3-4):142--363.

\bibitem[Ning, 2025]{ning2025convergence}
Ning, N. (2025).
\newblock Convergence of dirichlet forms for mcmc optimal scaling with
  dependent target distributions on large graphs.
\newblock {\em The Annals of Applied Probability}, 35(2):898--935.

\bibitem[O'Hara and Sillanp{\"a}{\"a}, 2009]{ohara2009review}
O'Hara, R.~B. and Sillanp{\"a}{\"a}, M.~J. (2009).
\newblock A review of {B}ayesian variable selection methods: what, how and
  which.
\newblock {\em {Bayesian} Analysis}, 4(1):85--117.

\bibitem[Park and Casella, 2008]{park2008bayesian}
Park, T. and Casella, G. (2008).
\newblock The {B}ayesian lasso.
\newblock {\em Journal of the American Statistical Association},
  103(482):681--686.

\bibitem[Pereyra, 2016]{pereyra2016proximal}
Pereyra, M. (2016).
\newblock Proximal {M}arkov chain {M}onte {C}arlo algorithms.
\newblock {\em Statistics and Computing}, 26(4):745--760.

\bibitem[Roberts et~al., 1997]{gelman1997weak}
Roberts, G.~O., Gelman, A., and Gilks, W.~R. (1997).
\newblock Weak convergence and optimal scaling of random walk {M}etropolis
  algorithms.
\newblock {\em The Annals of Applied Probability}, 7(1):110--120.

\bibitem[Roberts and Rosenthal, 1998]{roberts1998optimal}
Roberts, G.~O. and Rosenthal, J.~S. (1998).
\newblock Optimal scaling of discrete approximations to {L}angevin diffusions.
\newblock {\em Journal of the Royal Statistical Society: Series B (Statistical
  Methodology)}, 60(1):255--268.

\bibitem[Roberts and Tweedie, 1996]{Roberts1996Exponential}
Roberts, G.~O. and Tweedie, R.~L. (1996).
\newblock {Exponential convergence of {Langevin} distributions and their
  discrete approximations}.
\newblock {\em Bernoulli}, 2(4):341 -- 363.

\bibitem[Seeger, 2008]{seeger2008bayesian}
Seeger, M.~W. (2008).
\newblock {Bayesian} inference and optimal design for the sparse linear model.
\newblock {\em Journal of Machine Learning Research}, 9:759--813.

\bibitem[Van~den Dries, 1998]{van1998tame}
Van~den Dries, L. (1998).
\newblock {\em Tame topology and o-minimal structures}, volume 248.
\newblock Cambridge university press.

\end{thebibliography}

\end{document}